\newcommand*{\QEDA}
{\hfill\ensuremath{\blacksquare}}
\newtheorem{theo}{Theorem}
\newtheorem{theorem}{Theorem}
\newtheorem{axiom}[theorem]{Axiom}
\newtheorem{conjecture}[theorem]{Conjecture}
\newtheorem{corollary}{Corollary}
\newtheorem{definition}[theorem]{Definition}
\newtheorem{example}[theorem]{Example}
\newtheorem{exercise}[theorem]{Exercise}
\newtheorem{lemma}{Lemma}
\newtheorem{proposition}[theorem]{Proposition}
\newtheorem{remark}{Remark}
\let\pdfoutput=\undefined\fi
\chardef\@x10\chardef\@xv60
\def\tcitime{
\def\@time{%
  \@minute\time\@hour\@minute\divide\@hour\@xv
  \ifnum\@hour<\@x 0\fi\the\@hour:%
  \multiply\@hour\@xv\advance\@minute-\@hour
  \ifnum\@minute<\@x 0\fi\the\@minute
  }}%
\def\x@hyperref#1#2#3{%
   \catcode`\~ = 12
   \catcode`\$ = 12
   \catcode`\_ = 12
   \catcode`\# = 12
   \catcode`\& = 12
   \catcode`\% = 12
   \y@hyperref{#1}{#2}{#3}%
}
\def\y@hyperref#1#2#3#4{%
   #2\ref{#4}#3
   \catcode`\~ = 13
   \catcode`\$ = 3
   \catcode`\_ = 8
   \catcode`\# = 6
   \catcode`\& = 4
   \catcode`\% = 14
}
\def\QCTOpt[#1]#2{%
  \def\QCTOptB{#1}
  \def\QCTOptA{#2}
}
\def\QCTNOpt#1{%
  \def\QCTOptA{#1}
  \let\QCTOptB\empty
}
\def\Qct{%
  \@ifnextchar[{%
    \QCTOpt}{\QCTNOpt}
}
\def\QCBOpt[#1]#2{%
  \def\QCBOptB{#1}%
  \def\QCBOptA{#2}%
}
\def\QCBNOpt#1{%
  \def\QCBOptA{#1}%
  \let\QCBOptB\empty
}
\def\Qcb{%
  \@ifnextchar[{%
    \QCBOpt}{\QCBNOpt}%
}
\def\PrepCapArgs{%
  \ifx\QCBOptA\empty
    \ifx\QCTOptA\empty
      {}%
    \else
      \ifx\QCTOptB\empty
        {\QCTOptA}%
      \else
        [\QCTOptB]{\QCTOptA}%
      \fi
    \fi
  \else
    \ifx\QCBOptA\empty
      {}%
    \else
      \ifx\QCBOptB\empty
        {\QCBOptA}%
      \else
        [\QCBOptB]{\QCBOptA}%
      \fi
    \fi
  \fi
}
\def\GRAPHICSPS#1{%
 \ifcase\GRAPHICSTYPE
   \special{ps: #1}%
 \or
   \special{language "PS", include "#1"}%
 \fi
}%
\def\graffile#1#2#3#4{%
    \bgroup
	   \@inlabelfalse
       \leavevmode
       \@ifundefined{bbl@deactivate}{\def~{\string~}}{\activesoff}%
        \raise -#4 \BOXTHEFRAME{%
           \hbox to #2{\raise #3\hbox to #2{\null #1\hfil}}}%
    \egroup
}%
\def\draftbox#1#2#3#4{%
 \leavevmode\raise -#4 \hbox{%
  \frame{\rlap{\protect\tiny #1}\hbox to #2%
   {\vrule height#3 width\z@ depth\z@\hfil}%
  }%
 }%
}%
\let\nographics=\@msidraft
\newif\ifwasdraft
\def\GRAPHIC#1#2#3#4#5{%
   \ifnum\@msidraft=\@ne\draftbox{#2}{#3}{#4}{#5}%
   \else\graffile{#1}{#3}{#4}{#5}%
   \fi
}
\def\addtoLaTeXparams#1{%
    \edef\LaTeXparams{\LaTeXparams #1}}%
\newif\ifBoxFrame \BoxFramefalse
\newif\ifOverFrame \OverFramefalse
\newif\ifUnderFrame \UnderFramefalse
\def\BOXTHEFRAME#1{%
   \hbox{%
      \ifBoxFrame
         \frame{#1}%
      \else
         {#1}%
      \fi
   }%
}
\def\doFRAMEparams#1{\BoxFramefalse\OverFramefalse\UnderFramefalse\readFRAMEparams#1\end}%
\def\readFRAMEparams#1{%
 \ifx#1\end%
  \let\next=\relax
  \else
  \ifx#1i\dispkind=\z@\fi
  \ifx#1d\dispkind=\@ne\fi
  \ifx#1f\dispkind=\tw@\fi
  \ifx#1t\addtoLaTeXparams{t}\fi
  \ifx#1b\addtoLaTeXparams{b}\fi
  \ifx#1p\addtoLaTeXparams{p}\fi
  \ifx#1h\addtoLaTeXparams{h}\fi
  \ifx#1X\BoxFrametrue\fi
  \ifx#1O\OverFrametrue\fi
  \ifx#1U\UnderFrametrue\fi
  \ifx#1w
    \ifnum\@msidraft=1\wasdrafttrue\else\wasdraftfalse\fi
    \@msidraft=\@ne
  \fi
  \let\next=\readFRAMEparams
  \fi
 \next
 }%
\def\IFRAME#1#2#3#4#5#6{%
      \bgroup
      \let\QCTOptA\empty
      \let\QCTOptB\empty
      \let\QCBOptA\empty
      \let\QCBOptB\empty
      #6%
      \parindent=0pt
      \leftskip=0pt
      \rightskip=0pt
      \setbox0=\hbox{\QCBOptA}%
      \@tempdima=#1\relax
      \ifOverFrame
          \typeout{This is not implemented yet}%
          \show\HELP
      \else
         \ifdim\wd0>\@tempdima
            \advance\@tempdima by \@tempdima
            \ifdim\wd0 >\@tempdima
               \setbox1 =\vbox{%
                  \unskip\hbox to \@tempdima{\hfill\GRAPHIC{#5}{#4}{#1}{#2}{#3}\hfill}%
                  \unskip\hbox to \@tempdima{\parbox[b]{\@tempdima}{\QCBOptA}}%
               }%
               \wd1=\@tempdima
            \else
               \textwidth=\wd0
               \setbox1 =\vbox{%
                 \noindent\hbox to \wd0{\hfill\GRAPHIC{#5}{#4}{#1}{#2}{#3}\hfill}\\%
                 \noindent\hbox{\QCBOptA}%
               }%
               \wd1=\wd0
            \fi
         \else
            \ifdim\wd0>0pt
              \hsize=\@tempdima
              \setbox1=\vbox{%
                \unskip\GRAPHIC{#5}{#4}{#1}{#2}{0pt}%
                \break
                \unskip\hbox to \@tempdima{\hfill \QCBOptA\hfill}%
              }%
              \wd1=\@tempdima
           \else
              \hsize=\@tempdima
              \setbox1=\vbox{%
                \unskip\GRAPHIC{#5}{#4}{#1}{#2}{0pt}%
              }%
              \wd1=\@tempdima
           \fi
         \fi
         \@tempdimb=\ht1
         \advance\@tempdimb by -#2
         \advance\@tempdimb by #3
         \leavevmode
         \raise -\@tempdimb \hbox{\box1}%
      \fi
      \egroup%
}%
\def\DFRAME#1#2#3#4#5{%
  \vspace\topsep
  \hfil\break
  \bgroup
     \leftskip\@flushglue
	 \rightskip\@flushglue
	 \parindent\z@
	 \parfillskip\z@skip
     \let\QCTOptA\empty
     \let\QCTOptB\empty
     \let\QCBOptA\empty
     \let\QCBOptB\empty
	 \vbox\bgroup
        \ifOverFrame 
           #5\QCTOptA\par
        \fi
        \GRAPHIC{#4}{#3}{#1}{#2}{\z@}%
        \ifUnderFrame 
           \break#5\QCBOptA
        \fi
	 \egroup
  \egroup
  \vspace\topsep
  \break
}%
\def\FFRAME#1#2#3#4#5#6#7{%
  \@ifundefined{floatstyle}
    {
     \begin{figure}[#1]%
    }
    {
	 \ifx#1h
      \begin{figure}[H]%
	 \else
      \begin{figure}[#1]%
	 \fi
	}
  \let\QCTOptA\empty
  \let\QCTOptB\empty
  \let\QCBOptA\empty
  \let\QCBOptB\empty
  \ifOverFrame
    #4
    \ifx\QCTOptA\empty
    \else
      \ifx\QCTOptB\empty
        \caption{\QCTOptA}%
      \else
        \caption[\QCTOptB]{\QCTOptA}%
      \fi
    \fi
    \ifUnderFrame\else
      \label{#5}%
    \fi
  \else
    \UnderFrametrue%
  \fi
  \begin{center}\GRAPHIC{#7}{#6}{#2}{#3}{\z@}\end{center}%
  \ifUnderFrame
    #4
    \ifx\QCBOptA\empty
      \caption{}%
    \else
      \ifx\QCBOptB\empty
        \caption{\QCBOptA}%
      \else
        \caption[\QCBOptB]{\QCBOptA}%
      \fi
    \fi
    \label{#5}%
  \fi
  \end{figure}%
 }%
\def\makeactives{
  \catcode`\"=\active
  \catcode`\;=\active
  \catcode`\:=\active
  \catcode`\'=\active
  \catcode`\~=\active
}
   \gdef\activesoff{%
      \def"{\string"}%
      \def;{\string;}%
      \def:{\string:}%
      \def'{\string'}%
      \def~{\string~}%
    }
\def\FRAME#1#2#3#4#5#6#7#8{%
 \bgroup
 \ifnum\@msidraft=\@ne
   \wasdrafttrue
 \else
   \wasdraftfalse%
 \fi
 \def\LaTeXparams{}%
 \dispkind=\z@
 \def\LaTeXparams{}%
 \doFRAMEparams{#1}%
 \ifnum\dispkind=\z@\IFRAME{#2}{#3}{#4}{#7}{#8}{#5}\else
  \ifnum\dispkind=\@ne\DFRAME{#2}{#3}{#7}{#8}{#5}\else
   \ifnum\dispkind=\tw@
    \edef\@tempa{\noexpand\FFRAME{\LaTeXparams}}%
    \@tempa{#2}{#3}{#5}{#6}{#7}{#8}%
    \fi
   \fi
  \fi
  \ifwasdraft\@msidraft=1\else\@msidraft=0\fi{}%
  \egroup
 }%
\def\TEXUX#1{"texux"}
\long\def\QQQ#1#2{%
     \long\expandafter\def\csname#1\endcsname{#2}}%
\long\def\QQA#1#2{}%
\def\QTR#1#2{{\csname#1\endcsname {#2}}}%
\def\EXPAND#1[#2]#3{}%
\def\NOEXPAND#1[#2]#3{}%
\def\LaTeXparent#1{}%
\def\ChildStyles#1{}%
\def\ChildDefaults#1{}%
\def\QTagDef#1#2#3{}%
  \providecommand{\UNICODE}[2][]{\protect\rule{.1in}{.1in}}
  \providecommand{\U}[1]{\protect\rule{.1in}{.1in}}
\def\QQfnmark#1{\footnotemark}
 \def\abstract{%
  \if@twocolumn
   \section*{Abstract (Not appropriate in this style!)}%
   \else \small 
   \begin{center}{\bf Abstract\vspace{-.5em}\vspace{\z@}}\end{center}%
   \quotation 
   \fi
  }%
   \def\registered{\relax\ifmmode{}\r@gistered
                    \else$\m@th\r@gistered$\fi}%
 \def\r@gistered{^{\ooalign
  {\hfil\raise.07ex\hbox{$\scriptstyle\rm\text{R}$}\hfil\crcr
  \mathhexbox20D}}}}{}%
\newdimen\theight
\def\newfmtname{LaTeX2e}
  \DeclareOldFontCommand{\rm}{\normalfont\rmfamily}{\mathrm}
  \DeclareOldFontCommand{\sf}{\normalfont\sffamily}{\mathsf}
  \DeclareOldFontCommand{\tt}{\normalfont\ttfamily}{\mathtt}
  \DeclareOldFontCommand{\bf}{\normalfont\bfseries}{\mathbf}
  \DeclareOldFontCommand{\it}{\normalfont\itshape}{\mathit}
  \DeclareOldFontCommand{\sl}{\normalfont\slshape}{\@nomath\sl}
  \DeclareOldFontCommand{\sc}{\normalfont\scshape}{\@nomath\sc}
\def\alpha{{\Greekmath 010B}}%
\def\beta{{\Greekmath 010C}}%
\def\gamma{{\Greekmath 010D}}%
\def\delta{{\Greekmath 010E}}%
\def\epsilon{{\Greekmath 010F}}%
\def\zeta{{\Greekmath 0110}}%
\def\eta{{\Greekmath 0111}}%
\def\theta{{\Greekmath 0112}}%
\def\iota{{\Greekmath 0113}}%
\def\kappa{{\Greekmath 0114}}%
\def\lambda{{\Greekmath 0115}}%
\def\mu{{\Greekmath 0116}}%
\def\nu{{\Greekmath 0117}}%
\def\xi{{\Greekmath 0118}}%
\def\pi{{\Greekmath 0119}}%
\def\rho{{\Greekmath 011A}}%
\def\sigma{{\Greekmath 011B}}%
\def\tau{{\Greekmath 011C}}%
\def\upsilon{{\Greekmath 011D}}%
\def\phi{{\Greekmath 011E}}%
\def\chi{{\Greekmath 011F}}%
\def\psi{{\Greekmath 0120}}%
\def\omega{{\Greekmath 0121}}%
\def\varepsilon{{\Greekmath 0122}}%
\def\vartheta{{\Greekmath 0123}}%
\def\varpi{{\Greekmath 0124}}%
\def\varrho{{\Greekmath 0125}}%
\def\varsigma{{\Greekmath 0126}}%
\def\varphi{{\Greekmath 0127}}%
\def\nabla{{\Greekmath 0272}}
\def\FindBoldGroup{%
   {\setbox0=\hbox{$\mathbf{x\global\edef\theboldgroup{\the\mathgroup}}$}}%
}
\def\Greekmath#1#2#3#4{%
    \if@compatibility
        \ifnum\mathgroup=\symbold
           \mathchoice{\mbox{\boldmath$\displaystyle\mathchar"#1#2#3#4$}}%
                      {\mbox{\boldmath$\textstyle\mathchar"#1#2#3#4$}}%
                      {\mbox{\boldmath$\scriptstyle\mathchar"#1#2#3#4$}}%
                      {\mbox{\boldmath$\scriptscriptstyle\mathchar"#1#2#3#4$}}%
        \else
           \mathchar"#1#2#3#4%
        \fi 
    \else 
        \FindBoldGroup
        \ifnum\mathgroup=\theboldgroup 
           \mathchoice{\mbox{\boldmath$\displaystyle\mathchar"#1#2#3#4$}}%
                      {\mbox{\boldmath$\textstyle\mathchar"#1#2#3#4$}}%
                      {\mbox{\boldmath$\scriptstyle\mathchar"#1#2#3#4$}}%
                      {\mbox{\boldmath$\scriptscriptstyle\mathchar"#1#2#3#4$}}%
        \else
           \mathchar"#1#2#3#4%
        \fi     	    
	  \fi}
\newif\ifGreekBold  \GreekBoldfalse
\let\SAVEPBF=\pbf
\def\pbf{\GreekBoldtrue\SAVEPBF}%
  \newcounter{equationnumber}  
  \def\mathletters{%
     \addtocounter{equation}{1}
     \edef\@currentlabel{\theequation}%
     \setcounter{equationnumber}{\c@equation}
     \setcounter{equation}{0}%
     \edef\theequation{\@currentlabel\noexpand\alph{equation}}%
  }
    \def\BibTeX{{\rm B\kern-.05em{\sc i\kern-.025em b}\kern-.08em
                 T\kern-.1667em\lower.7ex\hbox{E}\kern-.125emX}}}{}%
\def\AmS{{\protect\usefont{OMS}{cmsy}{m}{n}%
                A\kern-.1667em\lower.5ex\hbox{M}\kern-.125emS}}}{}%
\def\@@eqncr{\let\@tempa\relax
    \ifcase\@eqcnt \def\@tempa{& & &}\or \def\@tempa{& &}%
      \else \def\@tempa{&}\fi
     \@tempa
     \if@eqnsw
        \iftag@
           \@taggnum
        \else
           \@eqnnum\stepcounter{equation}%
        \fi
     \fi
     \global\tag@false
     \global\@eqnswtrue
     \global\@eqcnt\z@\cr}
\def\TCItag{\@ifnextchar*{\@TCItagstar}{\@TCItag}}
\def\@TCItag#1{%
    \global\tag@true
    \global\def\@taggnum{(#1)}%
    \global\def\@currentlabel{#1}}
\def\@TCItagstar*#1{%
    \global\tag@true
    \global\def\@taggnum{#1}%
    \global\def\@currentlabel{#1}}
\def\QATOPD#1#2#3#4{{#3 \atopwithdelims#1#2 #4}}%
\def\tint{\msi@int\textstyle\int}%
\def\tiint{\msi@int\textstyle\iint}%
\def\tiiint{\msi@int\textstyle\iiint}%
\def\tiiiint{\msi@int\textstyle\iiiint}%
\def\tidotsint{\msi@int\textstyle\idotsint}%
\def\toint{\msi@int\textstyle\oint}%
\newtoks\temptoksa
\newtoks\temptoksb
\newtoks\temptoksc
\def\msi@int#1#2{%
 \def\@temp{{#1#2\the\temptoksc_{\the\temptoksa}^{\the\temptoksb}}}%
 \futurelet\@nextcs
 \@int
}
\def\@int{%
   \ifx\@nextcs\limits
      \typeout{Found limits}%
      \temptoksc={\limits}%
	  \let\@next\@intgobble%
   \else\ifx\@nextcs\nolimits
      \typeout{Found nolimits}%
      \temptoksc={\nolimits}%
	  \let\@next\@intgobble%
   \else
      \typeout{Did not find limits or no limits}%
      \temptoksc={}%
      \let\@next\msi@limits%
   \fi\fi
   \@next   
}%
\def\@intgobble#1{%
   \typeout{arg is #1}%
   \msi@limits
}
\def\msi@limits{%
   \temptoksa={}%
   \temptoksb={}%
   \@ifnextchar_{\@limitsa}{\@limitsb}%
}
\def\@limitsa_#1{%
   \temptoksa={#1}%
   \@ifnextchar^{\@limitsc}{\@temp}%
}
\def\@limitsb{%
   \@ifnextchar^{\@limitsc}{\@temp}%
}
\def\@limitsc^#1{%
   \temptoksb={#1}%
   \@ifnextchar_{\@limitsd}{\@temp}%
}
\def\@limitsd_#1{%
   \temptoksa={#1}%
   \@temp
}
\def\dint{\msi@int\displaystyle\int}%
\def\diint{\msi@int\displaystyle\iint}%
\def\diiint{\msi@int\displaystyle\iiint}%
\def\diiiint{\msi@int\displaystyle\iiiint}%
\def\didotsint{\msi@int\displaystyle\idotsint}%
\def\doint{\msi@int\displaystyle\oint}%
\def\ExitTCILatex{\makeatother }
\if@compatibility\message{amsmath already loaded}\fi\aftergroup\ExitTCILatex}
\if@compatibility\message{amstex already loaded}\fi\aftergroup\ExitTCILatex}
\if@compatibility\message{amsgen already loaded}\fi\aftergroup\ExitTCILatex}
\let\DOTSI\relax
\def\RIfM@{\relax\ifmmode}%
\def\FN@{\futurelet\next}%
\def\iint{\DOTSI\intno@\tw@\FN@\ints@}%
\def\iiint{\DOTSI\intno@\thr@@\FN@\ints@}%
\def\iiiint{\DOTSI\intno@4 \FN@\ints@}%
\def\idotsint{\DOTSI\intno@\z@\FN@\ints@}%
\def\ints@{\findlimits@\ints@@}%
\newif\iflimtoken@
\newif\iflimits@
\def\findlimits@{\limtoken@true\ifx\next\limits\limits@true
 \else\ifx\next\nolimits\limits@false\else
 \limtoken@false\ifx\ilimits@\nolimits\limits@false\else
 \ifinner\limits@false\else\limits@true\fi\fi\fi\fi}%
\def\multint@{\int\ifnum\intno@=\z@\intdots@                          
 \else\intkern@\fi                                                    
 \ifnum\intno@>\tw@\int\intkern@\fi                                   
 \ifnum\intno@>\thr@@\int\intkern@\fi                                 
 \int}
\def\multintlimits@{\intop\ifnum\intno@=\z@\intdots@\else\intkern@\fi
 \ifnum\intno@>\tw@\intop\intkern@\fi
 \ifnum\intno@>\thr@@\intop\intkern@\fi\intop}%
\def\intic@{%
    \mathchoice{\hskip.5em}{\hskip.4em}{\hskip.4em}{\hskip.4em}}%
\def\negintic@{\mathchoice
 {\hskip-.5em}{\hskip-.4em}{\hskip-.4em}{\hskip-.4em}}%
\def\ints@@{\iflimtoken@                                              
 \def\ints@@@{\iflimits@\negintic@
   \mathop{\intic@\multintlimits@}\limits                             
  \else\multint@\nolimits\fi                                          
  \eat@}
 \else                                                                
 \def\ints@@@{\iflimits@\negintic@
  \mathop{\intic@\multintlimits@}\limits\else
  \multint@\nolimits\fi}\fi\ints@@@}%
\def\intkern@{\mathchoice{\!\!\!}{\!\!}{\!\!}{\!\!}}%
\def\plaincdots@{\mathinner{\cdotp\cdotp\cdotp}}%
\def\intdots@{\mathchoice{\plaincdots@}%
 {{\cdotp}\mkern1.5mu{\cdotp}\mkern1.5mu{\cdotp}}%
 {{\cdotp}\mkern1mu{\cdotp}\mkern1mu{\cdotp}}%
 {{\cdotp}\mkern1mu{\cdotp}\mkern1mu{\cdotp}}}%
\def\RIfM@{\relax\protect\ifmmode}
\def\text{\RIfM@\expandafter\text@\else\expandafter\mbox\fi}
\let\nfss@text\text
\def\text@#1{\mathchoice
   {\textdef@\displaystyle\f@size{#1}}%
   {\textdef@\textstyle\tf@size{\firstchoice@false #1}}%
   {\textdef@\textstyle\sf@size{\firstchoice@false #1}}%
   {\textdef@\textstyle \ssf@size{\firstchoice@false #1}}%
   \glb@settings}
\def\textdef@#1#2#3{\hbox{{%
                    \everymath{#1}%
                    \let\f@size#2\selectfont
                    #3}}}
\newif\iffirstchoice@
\def\Let@{\relax\iffalse{\fi\let\\=\cr\iffalse}\fi}%
\def\vspace@{\def\vspace##1{\crcr\noalign{\vskip##1\relax}}}%
\def\multilimits@{\bgroup\vspace@\Let@
 \baselineskip\fontdimen10 \scriptfont\tw@
 \advance\baselineskip\fontdimen12 \scriptfont\tw@
 \lineskip\thr@@\fontdimen8 \scriptfont\thr@@
 \lineskiplimit\lineskip
 \vbox\bgroup\ialign\bgroup\hfil$\m@th\scriptstyle{##}$\hfil\crcr}%
\def\Sb{_\multilimits@}%
\def\endSb{\crcr\egroup\egroup\egroup}%
\def\Sp{^\multilimits@}%
\newdimen\ex@
\def\rightarrowfill@#1{$#1\m@th\mathord-\mkern-6mu\cleaders
 \hbox{$#1\mkern-2mu\mathord-\mkern-2mu$}\hfill
 \mkern-6mu\mathord\rightarrow$}%
\def\leftarrowfill@#1{$#1\m@th\mathord\leftarrow\mkern-6mu\cleaders
 \hbox{$#1\mkern-2mu\mathord-\mkern-2mu$}\hfill\mkern-6mu\mathord-$}%
\def\leftrightarrowfill@#1{$#1\m@th\mathord\leftarrow
\mkern-6mu\cleaders
 \hbox{$#1\mkern-2mu\mathord-\mkern-2mu$}\hfill
 \mkern-6mu\mathord\rightarrow$}%
\def\overrightarrow{\mathpalette\overrightarrow@}%
\def\overrightarrow@#1#2{\vbox{\ialign{##\crcr\rightarrowfill@#1\crcr
 \noalign{\kern-\ex@\nointerlineskip}$\m@th\hfil#1#2\hfil$\crcr}}}%
\def\overleftarrow{\mathpalette\overleftarrow@}%
\def\overleftarrow@#1#2{\vbox{\ialign{##\crcr\leftarrowfill@#1\crcr
 \noalign{\kern-\ex@\nointerlineskip}$\m@th\hfil#1#2\hfil$\crcr}}}%
\def\overleftrightarrow{\mathpalette\overleftrightarrow@}%
\def\overleftrightarrow@#1#2{\vbox{\ialign{##\crcr
   \leftrightarrowfill@#1\crcr
 \noalign{\kern-\ex@\nointerlineskip}$\m@th\hfil#1#2\hfil$\crcr}}}%
\def\underrightarrow{\mathpalette\underrightarrow@}%
\def\underrightarrow@#1#2{\vtop{\ialign{##\crcr$\m@th\hfil#1#2\hfil
  $\crcr\noalign{\nointerlineskip}\rightarrowfill@#1\crcr}}}%
\def\underleftarrow{\mathpalette\underleftarrow@}%
\def\underleftarrow@#1#2{\vtop{\ialign{##\crcr$\m@th\hfil#1#2\hfil
  $\crcr\noalign{\nointerlineskip}\leftarrowfill@#1\crcr}}}%
\def\underleftrightarrow{\mathpalette\underleftrightarrow@}%
\def\underleftrightarrow@#1#2{\vtop{\ialign{##\crcr$\m@th
  \hfil#1#2\hfil$\crcr
 \noalign{\nointerlineskip}\leftrightarrowfill@#1\crcr}}}%
\def\qopnamewl@#1{\mathop{\operator@font#1}\nlimits@}
\let\nlimits@\displaylimits
\def\setboxz@h{\setbox\z@\hbox}
\def\varlim@#1#2{\mathop{\vtop{\ialign{##\crcr
 \hfil$#1\m@th\operator@font lim$\hfil\crcr
 \noalign{\nointerlineskip}#2#1\crcr
 \noalign{\nointerlineskip\kern-\ex@}\crcr}}}}
 \def\rightarrowfill@#1{\m@th\setboxz@h{$#1-$}\ht\z@\z@
  $#1\copy\z@\mkern-6mu\cleaders
  \hbox{$#1\mkern-2mu\box\z@\mkern-2mu$}\hfill
  \mkern-6mu\mathord\rightarrow$}
\def\leftarrowfill@#1{\m@th\setboxz@h{$#1-$}\ht\z@\z@
  $#1\mathord\leftarrow\mkern-6mu\cleaders
  \hbox{$#1\mkern-2mu\copy\z@\mkern-2mu$}\hfill
  \mkern-6mu\box\z@$}
\def\projlim{\qopnamewl@{proj\,lim}}
\def\injlim{\qopnamewl@{inj\,lim}}
\def\varinjlim{\mathpalette\varlim@\rightarrowfill@}
\def\varprojlim{\mathpalette\varlim@\leftarrowfill@}
\def\varliminf{\mathpalette\varliminf@{}}
\def\varliminf@#1{\mathop{\underline{\vrule\@depth.2\ex@\@width\z@
   \hbox{$#1\m@th\operator@font lim$}}}}
\def\varlimsup{\mathpalette\varlimsup@{}}
\def\varlimsup@#1{\mathop{\overline
  {\hbox{$#1\m@th\operator@font lim$}}}}
\def\align{\@verbatim \frenchspacing\@vobeyspaces \@alignverbatim
You are using the "align" environment in a style in which it is not defined.}
\let\csname endalign*\endcsname =\endtrivlist
\def\alignat{\@verbatim \frenchspacing\@vobeyspaces \@alignatverbatim
You are using the "alignat" environment in a style in which it is not defined.}
\let\csname endalignat*\endcsname =\endtrivlist
\def\xalignat{\@verbatim \frenchspacing\@vobeyspaces \@xalignatverbatim
You are using the "xalignat" environment in a style in which it is not defined.}
\let\csname endxalignat*\endcsname =\endtrivlist
\def\gather{\@verbatim \frenchspacing\@vobeyspaces \@gatherverbatim
You are using the "gather" environment in a style in which it is not defined.}
\let\csname endgather*\endcsname =\endtrivlist
\def\multiline{\@verbatim \frenchspacing\@vobeyspaces \@multilineverbatim
You are using the "multiline" environment in a style in which it is not defined.}
\let\csname endmultiline*\endcsname =\endtrivlist
\def\arrax{\@verbatim \frenchspacing\@vobeyspaces \@arraxverbatim
You are using a type of "array" construct that is only allowed in AmS-LaTeX.}
\def\tabulax{\@verbatim \frenchspacing\@vobeyspaces \@tabulaxverbatim
You are using a type of "tabular" construct that is only allowed in AmS-LaTeX.}
\let\csname endarrax*\endcsname =\endtrivlist
\let\csname endtabulax*\endcsname =\endtrivlist
 \def\endequation{%
     \ifmmode\ifinner 
      \iftag@
        \addtocounter{equation}{-1} 
        $\hfil
           \displaywidth\linewidth\@taggnum\egroup \endtrivlist
        \global\tag@false
        \global\@ignoretrue   
      \else
        $\hfil
           \displaywidth\linewidth\@eqnnum\egroup \endtrivlist
        \global\tag@false
        \global\@ignoretrue 
      \fi
     \else   
      \iftag@
        \addtocounter{equation}{-1} 
        \eqno \hbox{\@taggnum}
        \global\tag@false%
        $$\global\@ignoretrue
      \else
        \eqno \hbox{\@eqnnum}
        $$\global\@ignoretrue
      \fi
     \fi\fi
 } 
 \newif\iftag@ \tag@false
 \def\TCItag{\@ifnextchar*{\@TCItagstar}{\@TCItag}}
 \def\@TCItag#1{%
     \global\tag@true
     \global\def\@taggnum{(#1)}%
     \global\def\@currentlabel{#1}}
 \def\@TCItagstar*#1{%
     \global\tag@true
     \global\def\@taggnum{#1}%
     \global\def\@currentlabel{#1}}
     \def\tag{\@ifnextchar*{\@tagstar}{\@tag}}
     \def\@tag#1{%
         \global\tag@true
         \global\def\@taggnum{(#1)}}
     \def\@tagstar*#1{%
         \global\tag@true
         \global\def\@taggnum{#1}}
\def\dfrac#1#2{{\displaystyle {#1 \over #2}}}%
\begin{document}

\title{Artificial Noise Revisited}
\author{Shuiyin~Liu,~Yi~Hong,~and~Emanuele~Viterbo\thanks{%
S.~Liu, Y.~Hong and E.~Viterbo are with the Department of Electrical and
Computer Systems Engineering, Monash University, Clayton, VIC 3800,
Australia (e-mail: \{shuiyin.liu, yi.hong, emanuele.viterbo\}@monash.edu).
This work was performed at the Monash Software Defined Telecommunications
Lab and the authors were supported by the Monash Professorial Fellowship,
2013 Monash Faculty of Engineering Seed Funding Scheme, and the Australian
Research Council Discovery Project with ARC DP130100336.}}
\maketitle

\begin{abstract}
The artificial noise (AN) scheme, proposed by Goel and Negi, is being
considered as one of the key enabling technology for secure communications
over MIMO wiretap channels. However, the decrease in secrecy rate due to the
increase in the number of Eve's antennas is not well understood. In this
paper, we develop an analytical framework to characterize the secrecy rate
of the AN scheme as a function of Eve's SNR, Bob's SNR, the number of
antennas in each terminal, and the power allocation scheme. We first derive
a closed-form expression for the average secrecy rate. We then derive a
closed-form expression for the asymptotic instantaneous secrecy rate with
large number of antennas at all terminals. Finally, we derive simple lower
and upper bounds on the average/instantaneous secrecy rate that provide a
tool for the system design.
\end{abstract}

\begin{IEEEkeywords}
artificial noise, secrecy capacity, physical layer security, wiretap
channel.
\end{IEEEkeywords}

\section{Introduction}

The security of data transmissions is a fundamental issue in wireless
communication systems, where the broadcast characteristics make it difficult
to prevent eavesdropping. Traditional key-based cryptography \cite{Hellman76,Goldwasser82,Silverman98} usually is based on the assumption that
the eavesdropper (Eve) has limited computational resources. These algorithms
ensure that it is computational infeasible to decipher the encrypted
messages without knowledge of the secret key. On the other hand, assuming
Eve has unlimited computational power, Wyner, in \cite{Wyner75}, analyzed
how one can reliably send information over a discrete memoryless wiretap
channel. Wyner showed\ that, if Eve intercepts a degraded version of the
intended receiver's (Bob's) signal, the transmitter (Alice) can limit the
information leakage by means of channel coding. The associated notion of
\emph{secrecy capacity} was introduced to characterize the maximum
transmission rate from Alice to Bob, below which Eve is unable to obtain any
information.

Wyner's original work provided the theoretical foundation for keyless
security, namely \emph{physical layer security}. Several studies have been
made to generalize Wyner's wiretap channel model. For example, in \cite%
{Hellman78}, Leung-Yan-Cheong and Hellman studied the Gaussian wiretap
channel and showed that a positive secrecy capacity exists only when Eve's
channel is of lower quality than that of Bob. In \cite{Csiszar78}, the
authors considered a non-degraded version of Wyner's wiretap channel. The
notion of wiretap channel has also been extended to fading channels. For
quasi-static fading channels, the outage probability of secrecy capacity is
derived in \cite{Bloch08}. For the ergodic fading channel, \cite{Liang08}
provides a detailed analysis of secrecy capacity. In \cite{Liang08MAC}, the
secrecy capacity region of a multiple-access channel with confidential
messages is derived. The secure transmissions over multiple-output
multiple-input (MIMO) wiretap channels are studied in \cite{Khisti10MO}.
The achievable average secrecy rate has been widely adopted as a
metric of security \cite{Bloch08,Liang08,Liang08MAC,Goel08,Khisti10MO}.

In the context of wiretap code design, Csisz\'{a}r \cite{Csiszar96} proposed
the \emph{strong secrecy} criterion, i.e., $\lim\limits_{n\rightarrow \infty
}I(\mathbf{u}$; $\mathbf{y)}=0$, which implies that the overall information
leakage between the message $\mathbf{u}$ and Eve's channel output $\mathbf{y}
$ should vanish as the codeword length $n$ tends to infinity. Polar codes
achieving strong secrecy over discrete memoryless channels have been
proposed in \cite{Vardy11}. For Gaussian wiretap channels, nested lattice
codes achieving strong secrecy were proposed in \cite{cong12IT}. In
particular, polar codes in \cite{Vardy11} and lattice codes in \cite%
{cong12IT} were shown to achieve semantic security \cite{Goldwasser82}. In \cite%
{oggerb11}, Oggier \emph{et al. }showed that it is possible to construct
lattice codes that maximizes Eve's error probability.

Instead of only relying on the randomness of communication channels,
physical layer jamming techniques were proposed to increase secrecy rate. In
\cite{Goel08}, Goel and Negi showed that it is possible to align additive
white Gaussian noise (AWGN), so called \textquotedblleft artificial
noise\textquotedblright\ (AN), within the null space of a MIMO channel
between Alice and Bob, thus only Eve is jammed. The idea of AN has been
extended to various system models \cite%
{Khisti10SO,Zhou10,Pei12,Liu13Letter,XZhang13}. When the number of Bob's
antennas $N_{\text{B}}$ is one, the asymptotic analysis of the secrecy
capacity was derived in \cite{Khisti10SO}, and its power allocation scheme
was presented in \cite{Zhou10}. In the case of imperfect channel state
information, an average minimum mean square (MSE) uplink-downlink duality
was derived in \cite{Pei12}. More recently, we have shown that Eve's error
probability can be maximized by any randomly distributed AN (not necessarily
Gaussian) \cite{Liu13Letter}. In the existing AN-based schemes, it is
commonly assumed that the number of Eve's antennas $N_{\text{E}}$ is smaller
than the number of Alice's antennas $N_{\text{A}}$, i.e., $N_{\text{E}}<N_{%
\text{A}}$ \cite{Goel08,Zhou10,XZhang13}. However, the relationship between
the secrecy rate and unbounded $N_{\text{E}}$ has never been explored.

In this work, we characterize the exact secrecy rate of the original AN
scheme \cite{Goel08} for any Eve-to-Bob channel noise-power ratios using
various AN power allocation schemes, with arbitrary number of antennas at
each terminal. Our contributions are three-fold: firstly, we derive a
closed-form expression for the \emph{average secrecy rate}; secondly, we
derive a closed-form expression for the asymptotic \emph{instantaneous
secrecy rate} as the number of antennas in each terminal becomes large; and
finally, we derive lower and upper bounds on the average/instantaneous
secrecy rate, leading to simple sufficient and necessary conditions that
guarantee positive average/instantaneous secrecy rate.

The paper is organized as follows: Section II presents the system model,
followed by the analysis of secrecy rate in Section III. Section IV provides
lower and upper bounds on the secrecy rate. Conclusions are drawn in Section
V. Proofs of the theorems are given in Appendix.

\textit{Notation:} Matrices and column vectors are denoted by upper and
lowercase boldface letters, and the Hermitian transpose, inverse,
pseudoinverse of a matrix $\mathbf{B}$ by $\mathbf{B}^{H}$, $\mathbf{B}^{-1}$%
, and $\mathbf{B}^{\dagger }$, respectively. $|\mathbf{B}|$ denotes the
determinant of $\mathbf{B}$. Let the random variables $\left\{ X_{n}\right\}
$ and $X$ be defined on the same probability space. We write $X_{n}\overset{%
a.s.}{\rightarrow }X$ if $X_{n}$ converges to $X$ almost surely or with
probability one. $\mathbf{I}_{n}$ denotes the identity matrix of size $n$.
An $m\times n$ null matrix is denoted by $\mathbf{0}_{m\times n}$. A
circularly symmetric complex Gaussian random variable $x$ with variance $%
\sigma ^{2}$\ is defined as $x\backsim \mathcal{N}_{\mathbb{C}}(0,\sigma
^{2})$. The real, complex, integer and complex integer numbers are denoted
by $\mathbb{R}$, $\mathbb{C}$, $\mathbb{Z}$ and $\mathbb{Z}\left[ i\right] $%
, respectively. $I(x;y)$ represents the mutual information of two random
variables $x$ and $y$. We use the standard asymptotic notation $f\left(
x\right) =O\left( g\left( x\right) \right) $ when $\lim
\sup\limits_{x\rightarrow \infty }|f(x)/g(x)|<\infty $. $\lceil x\rfloor $
rounds to the closest integer. A central complex Wishart matrix $\mathbf{A}%
\in \mathbb{C}^{m\times m}$ with $n$ degrees of freedom and covariance
matrix $\mathbf{\Sigma }$, is defined as $\mathbf{A}\backsim W_{m}(n$,$%
\mathbf{\Sigma })$. We write $\triangleq $ for equality in definition.

\section{System Model}

We consider secure communications over a three-terminal system, including\ a
transmitter (Alice), the intended receiver (Bob), and an unauthorized
receiver (Eve), equipped with $N_{\text{A}}$, $N_{\text{B}}$, and $N_{\text{E%
}}$ antennas, respectively. The signal vectors received by Bob and Eve are%
\begin{align}
\mathbf{z}& =\mathbf{Hx}+\mathbf{n}_{\text{B}}\text{,}  \label{B1} \\
\mathbf{y}& =\mathbf{G\mathbf{\mathbf{x}}}+\mathbf{n}_{\text{E}}\text{,}
\label{E1}
\end{align}%
where $\mathbf{x}\in \mathbb{C}^{N_{\text{A}}\times 1}$ is the transmit
signal vector, $\mathbf{H}\in \mathbb{C}^{N_{\text{B}}\times N_{\text{A}}}$
and $\mathbf{G}\in \mathbb{C}^{N_{\text{E}}\times N_{\text{A}}}$ are the
respective channel matrices between Alice to Bob and Alice to Eve, and $%
\mathbf{n}_{\text{B}}$, $\mathbf{n}_{\text{E}}$ are AWGN vectors with i.i.d.
entries $\sim \mathcal{N}_{\mathbb{C}}(0$, $\sigma _{\text{B}}^{2})$ and $%
\mathcal{N}_{\mathbb{C}}(0$, $\sigma _{\text{E}}^{2})$. We assume that the
entries of $\mathbf{H}$ and $\mathbf{G}$ are i.i.d. complex random variables
$\sim \mathcal{N}_{\mathbb{C}}(0$, $1)$.

Without loss of generality, we normalize Bob's channel noise variance to
one, i.e.,%
\begin{equation}
\sigma _{\text{B}}^{2}=1\text{,}
\end{equation}%
and accordingly normalize the total average transmission power E$(||\mathbf{x}%
||^{2}) $, as in \cite{Telatar99}.

\subsection{Artificial Noise Scheme}

The AN scheme assumes $N_{\text{B}}<N_{\text{A}}$, in order to ensure that $%
\mathbf{H}$ has a non-trivial null space $\mathbf{Z}=\mbox{null}(\mathbf{H})$
(such that $\mathbf{HZ}=\mathbf{0}_{N_{\text{B}}\times (N_{\text{A}}-N_{%
\text{B}})}$) \cite{Goel08}. Let $\mathbf{H}=\mathbf{U}\mathbf{\Lambda }%
\mathbf{V}^{H}$ be the singular value decomposition (SVD) of $\mathbf{H}$,
then we can write the unitary matrix $\mathbf{V}$ as%
\begin{equation}
\mathbf{V=[V}_{1},\mathbf{Z]}\text{,}
\end{equation}%
where the $N_{\text{B}}$ columns of $\mathbf{V}_{1}$ span the orthogonal
complement subspace to the  null space spanned by the columns of $\mathbf{Z}$.

Using the AN scheme, Alice transmits%
\begin{equation}
\mathbf{x}=\mathbf{V}_{1}\mathbf{u}+\mathbf{Zv}=\mathbf{V}\left[
\begin{array}{c}
\mathbf{u} \\
\mathbf{v}%
\end{array}%
\right] \text{,}  \label{T1}
\end{equation}%
where $\mathbf{u}\in \mathbb{C}^{N_{\text{B}}\times 1}$ is the information
vector and $\mathbf{v}\in \mathbb{C}^{(N_{\text{A}}-N_{\text{B}})\times 1}$
is the \textquotedblleft artificial noise\textquotedblright . For the
purpose of evaluating the achievable secrecy rate, both $\mathbf{u}$ and $%
\mathbf{v}$ are assumed to be Gaussian circularly symmetric random vectors
with i.i.d. complex entries $\sim \mathcal{N}_{\mathbb{C}}(0$, $\sigma _{%
\text{u}}^{2})$ and $\mathcal{N}_{\mathbb{C}}(0$, $\sigma _{\text{v}}^{2})$,
respectively.

Equations (\ref{B1}) and (\ref{E1}) can then be rewritten as%
\begin{align}
\mathbf{z}& =\mathbf{H\mathbf{\mathbf{V}}}_{1}\mathbf{u}+\mathbf{HZv+n}_{%
\text{B}}=\mathbf{H\mathbf{\mathbf{V}}}_{1}\mathbf{u}+\mathbf{n}_{\text{B}}%
\text{,}  \label{sec_mod2} \\
\mathbf{y}& =\mathbf{G\mathbf{\mathbf{V}}}_{1}\mathbf{u}+\mathbf{GZ}\mathbf{v%
}+\mathbf{n}_{\text{E}}\text{.}  \label{Eve_mod2}
\end{align}%
From (\ref{sec_mod2}) and (\ref{Eve_mod2}), we note that
$\mathbf{v}$ only degrades Eve's channel, but does not affect Bob.

In our paper, we assume the worst-case scenario for Alice and Bob described
in \cite{Goel08}:

\begin{itemize}
\item Alice has only the knowledge of $\mathbf{H}$.

\item Eve has the knowledge of $\mathbf{H}$, $\mathbf{G}$, $\mathbf{Z}$ and $%
\mathbf{\mathbf{V}}_{1}$.
\end{itemize}

Different from \cite{Goel08}, we assume no upper bound on $N_{\text{E}}$.

Since $\mathbf{V}$ is a unitary matrix, the total transmission power can be
written as%
\begin{equation}
||\mathbf{x}||^{2}=\left[
\begin{array}{c}
\mathbf{u} \\
\mathbf{v}%
\end{array}%
\right] ^{H}\mathbf{V}^{H}\mathbf{V}\left[
\begin{array}{c}
\mathbf{u} \\
\mathbf{v}%
\end{array}%
\right] =||\mathbf{u}||^{2}+||\mathbf{v}||^{2}\text{.}  \label{Total_power}
\end{equation}

We set the average transmit power constraint $P$,%
\begin{equation}
P=\text{E}(||\mathbf{x}||^{2})=P_{\text{u}}+P_{\text{v}}\text{,}
\label{Power_Cons}
\end{equation}%
where%
\begin{equation}
\begin{array}{lll}
P_{\text{u}} &=&\text{E}(||\mathbf{u}||^{2})=\sigma _{\text{u}}^{2}N_{\text{B%
}}\text{,}   \\
P_{\text{v}} &=&\text{E}(||\mathbf{v}||^{2})=\sigma _{\text{v}}^{2}(N_{\text{%
A}}-N_{\text{B}})\text{,}
\end{array}
 \label{Power_c2}
\end{equation}
are fixed by the power allocation scheme that selects the balance between $\sigma _{\text{u}%
}^{2} $ and $\sigma _{\text{v}}^{2}$.

\subsection{Instantaneous and Average Secrecy Capacities}

The idea underpinning the AN scheme is to increase secrecy capacity by
jamming Eve. We recall from \cite{Oggier11} the definition of instantaneous secrecy capacity:%
\begin{equation}
C_{\text{S}}\triangleq \max_{p\left( \mathbf{u}\right) }\left\{ I(\mathbf{%
u;z)-}I(\mathbf{u;y)}\right\} \text{.}  \label{CS}
\end{equation}%
where the maximum is taken over all possible input distributions $p\left(
\mathbf{u}\right) $.

We remark that $C_{\text{S}}$ is a function of $\mathbf{H}$ and $\mathbf{G}$%
, which are embedded in $\mathbf{z}$ and $\mathbf{y}$. To average out the
randomness of $C_{\text{S}}$, we further define the average secrecy
capacity, as in \cite{Goel08}%
\begin{equation}
\bar{C}_{\text{S}}\triangleq \max_{p\left( \mathbf{u}\right) }\left\{ I(%
\mathbf{u;z|H)-}I(\mathbf{u;y|H}\text{, }\mathbf{G)}\right\} \text{,}
\label{CS_ave}
\end{equation}%
where $I\left( X;Y|Z\right) \triangleq \text{E}_{Z}\left[ I\left( X;Y\right)
|Z\right] $, following the notation in \cite{Telatar99}.

Since closed form expressions for $C_{\text{S}}$ and $\bar{C}_{\text{S}}$
are not always available (except for the following Theorem \ref{Th_3} given
in Sec. III.E), we often resort to the corresponding secrecy rates, given by%
\begin{equation}
R_{\text{S}}\triangleq I(\mathbf{u;z)-}I\mathbf{(\mathbf{u;y)}}\text{,}
\label{CLB}
\end{equation}%
\begin{equation}
\bar{R}_{\text{S}}\triangleq I(\mathbf{u;z|H)-}I\mathbf{(\mathbf{u;y|H},%
\mathbf{G)}}\text{,}  \label{CLB_ave}
\end{equation}%
assuming Gaussian input alphabets, i.e., $\mathbf{v}$ and $\mathbf{u}$ are
mutually independent Gaussian vectors with i.i.d. complex entries $\mathcal{N%
}_{\mathbb{C}}(0$, $\sigma _{\text{v}}^{2})$ and $\mathcal{N}_{\mathbb{C}}(0$%
, $\sigma _{\text{u}}^{2})$, respectively.


\subsection{System Parameters}

We define Bob's and Eve's SNRs as

\begin{itemize}
\item SNR$_{\text{B}}\triangleq \sigma _{\text{u}}^{2}/\sigma _{\text{B}%
}^{2} $

\item SNR$_{\text{E}}\triangleq \sigma _{\text{u}}^{2}/\sigma _{\text{E}%
}^{2} $
\end{itemize}

To simplify our notation, we define three system parameters:

\begin{itemize}
\item $\alpha \triangleq \sigma _{\text{u}}^{2}/\sigma _{\text{E}}^{2}$ (SNR$%
_{\text{E}}$)

\item $\beta \triangleq \sigma _{\text{v}}^{2}/\sigma _{\text{u}}^{2}$ (AN
power allocation)

\item $\gamma \triangleq \sigma _{\text{E}}^{2}/\sigma _{\text{B}}^{2}$
(Eve-to-Bob noise-power ratio)
\end{itemize}

Note that SNR$_{\text{B}}=\alpha \gamma $. If $\gamma >1$, we say Eve has a
\emph{degraded} channel. Since we have normalized $\sigma _{\text{B}}^{2}$ to one,
we \ can write (\ref{Power_c2}) as

\begin{itemize}
\item $P_{\text{u}}=\alpha \gamma N_{\text{B}}$

\item $P_{\text{v}}=\alpha \beta \gamma (N_{\text{A}}-N_{\text{B}})$
\end{itemize}

\section{Secrecy Rate with Gaussian Input Alphabets}

In this section, we first derive a closed-form expression for the average
secrecy rate in (\ref{CLB_ave}) with Gaussian input alphabets. We then
present an asymptotic analysis on the instantaneous secrecy rate in (\ref%
{CLB}). Finally, we show average secrecy capacity in (\ref{CS_ave}) is
achieved with Gaussian input alphabets when $N_{\text{E}}\leq N_{\text{A}%
}-N_{\text{B}}$. To present our result, we define some useful functions.

\subsection{Definitions}

We first define the following function (see \cite{shin03})%
\begin{align}
& \Theta (m,n,x)\triangleq
e^{-1/x}\sum_{k=0}^{m-1}\sum_{l=0}^{k}\sum_{i=0}^{2l}\Bigg\{\dfrac{%
\displaystyle(-1)^{i}(2l)!(n-m+i)!}{\displaystyle2^{2k-i}l!i!(n-m+l)!}
\notag \\
& \cdot \left( \!\!\!%
\begin{array}{c}
2(k-l) \\
k-l%
\end{array}%
\!\!\!\right) \cdot \left( \!\!\!%
\begin{array}{c}
2(l+n-m) \\
2l-i%
\end{array}%
\!\!\!\right) \cdot \sum_{j=0}^{n-m+i}x^{-j}\Gamma (-j,1/x)\Bigg\}\text{,}
\label{Cap_CF}
\end{align}%
where
\scalebox{1}[0.8]{$\Big(\!\!\!
\begin{array}{c}
a \\
b\end{array}\!\!\!\Big)$} $=a!/((a-b)!b!)$ is the binomial coefficient, $%
n\geq m$ are positive integers, and $\Gamma (a,b)$ is the incomplete Gamma
function%
\begin{equation}
\Gamma (a,b)=\int_{b}^{\infty }x^{a-1}e^{-x}dx\text{.}
\end{equation}

We further define%
\begin{eqnarray}
N_{\min } &\triangleq &\min \left\{ N_{\text{E}}\text{, }N_{\text{A}}-N_{%
\text{B}}\right\} \text{,}  \label{n_min} \\
N_{\max } &\triangleq &\max \left\{ N_{\text{E}}\text{, }N_{\text{A}}-N_{%
\text{B}}\right\} \text{,}  \label{n_max}
\end{eqnarray}%
\begin{eqnarray}
\hat{N}_{\min } &\triangleq &\min \left\{ N_{\text{E}}\text{, }N_{\text{A}%
}\right\} \text{,}  \label{n_hat_min} \\
\hat{N}_{\max } &\triangleq &\max \left\{ N_{\text{E}}\text{, }N_{\text{A}%
}\right\} \text{.}  \label{n_hat_max}
\end{eqnarray}

Finally, we define a set of $N_{\text{A}}$ power ratios $\left\{ \theta
_{i}\right\} _{1}^{N_{\text{A}}}$, where%
\begin{equation}
\theta _{i}\triangleq \QATOPD\{ . {\alpha ~~~~~~~~~~1\leq i\leq N_{\text{B}%
}}{\alpha \beta ~~~N_{\text{B}}+1\leq i\leq N_{\text{A}}}  \label{theta}
\end{equation}

\subsection{Average Secrecy Rate}

A closed-form expression for $\bar{R}_{\text{S}}$ in (\ref{CLB_ave}) can be
derived using the results from \cite[Th. 2]{Telatar99}, \cite[Th. 1]{shin03}
and \cite[Th. 1]{Chiani10}, leading to the following theorem.

\begin{theo}
\label{Th_R1}%
\begin{equation}
\bar{R}_{\text{S}}=\Theta (N_{\text{B}},N_{\text{A}},\alpha \gamma )+\Theta
(N_{\min },N_{\max },\alpha \beta )-\Omega \text{,}  \label{RS_exact}
\end{equation}%
where $\Theta (\cdot ,\cdot ,\cdot )$ is given in (\ref{Cap_CF}),%
\begin{equation}
\Omega =\left\{
\begin{array}{l}
K\sum\limits_{k=1}^{\hat{N}_{\min }}\det \left( \mathbf{R}^{(k)}\right)
\text{, }~~~~~~\beta \neq 1 \\
\Theta (\hat{N}_{\min },\hat{N}_{\max },\alpha )\text{, }~~~~~~~\beta =1%
\end{array}%
\right.  \label{PPP}
\end{equation}%
\begin{equation}
K=\frac{(-1)^{N_{\text{E}}(N_{\text{A}}-\hat{N}_{\min })}}{\Gamma _{\hat{N}%
_{\min }}(N_{\text{E}})}\frac{\prod\limits_{i=1}^{2}\mu _{i}^{m_{i}N_{\text{E%
}}}}{\prod\limits_{i=1}^{2}\Gamma _{m_{i}}(m_{i})\prod\limits_{i<j}\left(
\mu _{i}-\mu _{j}\right) ^{m_{i}m_{j}}}\text{,}  \label{K}
\end{equation}%
\begin{equation*}
\Gamma _{k}(n)=\prod\limits_{i=1}^{k}(n-i)!\text{,}
\end{equation*}%
and $\mu _{1}>\mu _{2}$ are the two distinct eigenvalues of the matrix $%
\mathrm{diag}\left( \left\{ \theta _{i}^{-1}\right\} _{1}^{N_{\text{A}%
}}\right) $, with corresponding multiplicities $m_{1}$ and $m_{2}$ such that
$m_{1}+m_{2}=N_{\text{A}}$. The matrix $\mathbf{R}^{(k)}$ has elements%
\begingroup
\renewcommand*{\arraystretch}{2.5}
\begin{equation}
r_{i,j}^{(k)}=\left\{\!\!\!\!
\begin{array}{l}
\displaystyle(\mu _{\displaystyle e_{i}})^{N_{\text{A}}-j-d_{i}}\dfrac{%
\displaystyle\left(\displaystyle N_{\text{A}}-j\right) !}{(N_{\text{A}%
}-j-d_{i})!}\text{, }~~~~~\hat{N}_{\min }+1\leq j\leq N_{\text{A}} \\
\displaystyle(-1)^{\displaystyle d_{i}}\dfrac{\displaystyle\varphi (i,j)!}{%
\displaystyle(\mu _{\displaystyle e_{i}})^{\displaystyle\varphi (i,j)+1}}%
\text{,}~~~~~~~~~~~~1\leq j\leq \hat{N}_{\min }\text{, }j\neq k \\
\displaystyle(-1)^{\displaystyle d_{i}}\varphi (i,j)!e^{\displaystyle \mu _{%
\displaystyle e_{i}}}\sum\limits_{l=0}^{\displaystyle\varphi (i,j)}\dfrac{%
\displaystyle\Gamma (l-\varphi (i,j),\mu _{\displaystyle e_{i}})}{%
\displaystyle(\mu _{\displaystyle e_{i}})^{\displaystyle l+1}}\text{, }\text{%
otherwise}%
\end{array}%
\right.  \label{R}
\end{equation}%
\endgroup
where%
\begin{equation*}
e_{i}=\QATOPD\{ . {1~~~~~~~~~~1\leq i\leq m_{1}}{2~~~m_{1}+1\leq i\leq N_{%
\text{A}}}
\end{equation*}%
\begin{equation*}
d_{i}=\sum_{k=1}^{e_{i}}m_{k}-i\text{,}
\end{equation*}%
\begin{equation*}
\varphi (i,j)=N_{\text{E}}-\hat{N}_{\min }+j-1+d_{i}\text{.}
\end{equation*}
\end{theo}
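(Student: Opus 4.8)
\textit{Proof sketch.} Throughout, write $\Theta(m,n,x)=\text{E}[\log\det(\mathbf I_{m}+x\mathbf W)]$ for $\mathbf W\sim W_{m}(n,\mathbf I)$ with $n\geq m$; this is the content of the closed form recalled from \cite{shin03}. The plan is to split $\bar R_{\text{S}}=I(\mathbf u;\mathbf z|\mathbf H)-I(\mathbf u;\mathbf y|\mathbf H,\mathbf G)$ into Bob's and Eve's ergodic mutual informations and to evaluate each with a known closed form for the ergodic capacity of an i.i.d.\ (resp.\ one-sided correlated) Rayleigh MIMO channel. For Bob, use the SVD $\mathbf H=\mathbf U\mathbf\Lambda\mathbf V^{H}$ with $\mathbf V=[\mathbf V_{1},\mathbf Z]$: since $\mathbf H\mathbf Z=\mathbf 0$ and $\mathbf V_{1}\mathbf V_{1}^{H}+\mathbf Z\mathbf Z^{H}=\mathbf I_{N_{\text{A}}}$, we get $\mathbf H\mathbf V_{1}\mathbf V_{1}^{H}\mathbf H^{H}=\mathbf H\mathbf H^{H}$. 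As $\sigma_{\text{B}}^{2}=1$ and (\ref{sec_mod2}) is a Gaussian MIMO channel with input covariance $\sigma_{\text{u}}^{2}\mathbf I_{N_{\text{B}}}$, this yields $I(\mathbf u;\mathbf z|\mathbf H)=\text{E}[\log\det(\mathbf I_{N_{\text{B}}}+\sigma_{\text{u}}^{2}\mathbf H\mathbf H^{H})]$ with $\mathbf H\mathbf H^{H}\sim W_{N_{\text{B}}}(N_{\text{A}},\mathbf I)$ (recall the AN assumption $N_{\text{B}}<N_{\text{A}}$). Since $\sigma_{\text{u}}^{2}=\text{SNR}_{\text{B}}=\alpha\gamma$, the ergodic-capacity formulas of \cite[Th.~2]{Telatar99} and \cite[Th.~1]{shin03} give $I(\mathbf u;\mathbf z|\mathbf H)=\Theta(N_{\text{B}},N_{\text{A}},\alpha\gamma)$.

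For Eve, treat $\mathbf G\mathbf Z\mathbf v+\mathbf n_{\text{E}}$ in (\ref{Eve_mod2}) as coloured Gaussian noise with covariance $\mathbf K=\sigma_{\text{v}}^{2}\mathbf G\mathbf Z\mathbf Z^{H}\mathbf G^{H}+\sigma_{\text{E}}^{2}\mathbf I_{N_{\text{E}}}$, independent of $\mathbf u$; conditioned on $(\mathbf H,\mathbf G)$ this is again a Gaussian MIMO channel, so $I(\mathbf u;\mathbf y|\mathbf H,\mathbf G)=\text{E}[\log\det(\mathbf I_{N_{\text{E}}}+\sigma_{\text{u}}^{2}\mathbf G\mathbf V_{1}\mathbf V_{1}^{H}\mathbf G^{H}\mathbf K^{-1})]$. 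Applying $\det(\mathbf I+\mathbf A\mathbf B^{-1})=\det(\mathbf A+\mathbf B)/\det\mathbf B$, recombining $\mathbf V_{1}\mathbf V_{1}^{H}$ and $\mathbf Z\mathbf Z^{H}$ into $\mathbf V\,\mathrm{diag}(\sigma_{\text{u}}^{2}\mathbf I_{N_{\text{B}}},\sigma_{\text{v}}^{2}\mathbf I_{N_{\text{A}}-N_{\text{B}}})\mathbf V^{H}$, and normalising by $\sigma_{\text{E}}^{2}$, one rewrites $I(\mathbf u;\mathbf y|\mathbf H,\mathbf G)$ as $\text{E}[\log\det(\mathbf I_{N_{\text{E}}}+\widetilde{\mathbf G}\,\mathrm{diag}(\{\theta_{i}\})\widetilde{\mathbf G}^{H})]-\text{E}[\log\det(\mathbf I_{N_{\text{E}}}+\alpha\beta\,\mathbf G_{2}\mathbf G_{2}^{H})]$, where $\widetilde{\mathbf G}=\mathbf G\mathbf V$, $\mathbf G_{2}=\mathbf G\mathbf Z$, and $\{\theta_{i}\}$ is as in (\ref{theta}). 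The key observation is that $\mathbf V$ is a function of $\mathbf H$ only and $\mathbf G$ is independent of $\mathbf H$ with i.i.d.\ $\mathcal N_{\mathbb C}(0,1)$ entries, so by unitary invariance of the Gaussian ensemble $\widetilde{\mathbf G}$ is i.i.d.\ $\mathcal N_{\mathbb C}(0,1)$ (and independent of $\mathbf H$), and $\mathbf G_{2}$, a block of $N_{\text{A}}-N_{\text{B}}$ columns of $\widetilde{\mathbf G}$, is i.i.d.\ Gaussian as well. Hence, after the swap $\det(\mathbf I+\mathbf A\mathbf B)=\det(\mathbf I+\mathbf B\mathbf A)$, $\text{E}[\log\det(\mathbf I_{N_{\text{E}}}+\alpha\beta\,\mathbf G_{2}\mathbf G_{2}^{H})]=\Theta(N_{\min},N_{\max},\alpha\beta)$, and setting $\Omega\triangleq\text{E}[\log\det(\mathbf I_{N_{\text{E}}}+\widetilde{\mathbf G}\,\mathrm{diag}(\{\theta_{i}\})\widetilde{\mathbf G}^{H})]$ produces (\ref{RS_exact}).

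It remains to compute $\Omega$, the ergodic mutual information of an $N_{\text{E}}\times N_{\text{A}}$ i.i.d.\ Rayleigh channel with transmit covariance $\mathrm{diag}(\{\theta_{i}\})$, i.e.\ a one-sided correlated MIMO channel whose ``correlation'' matrix $\mathrm{diag}(\{\theta_{i}^{-1}\})$ has the two distinct eigenvalues $\mu_{1}>\mu_{2}$ with multiplicities $m_{1}+m_{2}=N_{\text{A}}$. When $\beta=1$ the covariance is $\alpha\mathbf I_{N_{\text{A}}}$, so $\Omega$ is an i.i.d.\ MIMO ergodic capacity whose smaller Gram matrix has size $\hat N_{\min}=\min\{N_{\text{E}},N_{\text{A}}\}$ and $\sim W_{\hat N_{\min}}(\hat N_{\max},\mathbf I)$; hence $\Omega=\Theta(\hat N_{\min},\hat N_{\max},\alpha)$, exactly as for Bob. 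When $\beta\neq1$ I would invoke \cite[Th.~1]{Chiani10}, which expresses such an expectation as a single determinant built by integrating the joint eigenvalue density of the correlated Wishart matrix $\widetilde{\mathbf G}\,\mathrm{diag}(\{\theta_{i}\})\widetilde{\mathbf G}^{H}$ column-by-column: the columns beyond $\hat N_{\min}$ (present only when $N_{\text{A}}>N_{\text{E}}$) contribute plain Vandermonde monomials $(\mu_{e_{i}})^{N_{\text{A}}-j-d_{i}}(N_{\text{A}}-j)!/(N_{\text{A}}-j-d_{i})!$, one distinguished column carries the $\log(1+\lambda)$ factor, and the remaining columns contribute moments $\int\lambda^{p}e^{-\mu_{e_{i}}\lambda}\,d\lambda$, yielding the incomplete-Gamma entries of (\ref{R}) and the normalising constant $K$.

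The step I expect to be the main obstacle is the repeated-eigenvalue (confluent) reduction: the generic correlated-MIMO formula of \cite{Chiani10} assumes all eigenvalues of the correlation matrix are simple, whereas here they fall into (at most) two groups of possibly large multiplicities $m_{1},m_{2}$. One must therefore pass to the coalescing limit, which replaces each repeated eigenvalue by a block of its successive derivatives — precisely the mechanism that generates the index $e_{i}\in\{1,2\}$, the derivative order $d_{i}=\sum_{k=1}^{e_{i}}m_{k}-i$, and the shifted exponent $\varphi(i,j)=N_{\text{E}}-\hat N_{\min}+j-1+d_{i}$ appearing in (\ref{R}) — and one must check this limit is finite and absorb the surviving Vandermonde factors into $K$ as written in (\ref{K}). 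Organising this confluent limit and tracking the three branches of $r_{i,j}^{(k)}$ (the columns $j>\hat N_{\min}$, the moment columns $1\leq j\leq\hat N_{\min}$ with $j\neq k$, and the single $\log$ column $j=k$) is where essentially all the work lies; the rest reduces to standard determinant manipulations together with integration-by-parts identities for $\Gamma(a,b)$. \QEDA
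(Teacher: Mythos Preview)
Your proposal is correct and follows essentially the same route as the paper: split $\bar R_{\text S}$ into Bob's and Eve's ergodic mutual informations, reduce Bob's term to $\Theta(N_{\text B},N_{\text A},\alpha\gamma)$ via $(\mathbf H\mathbf V_1)(\mathbf H\mathbf V_1)^H=\mathbf H\mathbf H^H$, and for Eve exploit the unitary invariance of the Gaussian ensemble (the paper cites \cite{Lukacs54} for the equivalent statement that $\mathbf G\mathbf V_1$ and $\mathbf G\mathbf Z$ are independent i.i.d.\ Gaussian) to write $I(\mathbf u;\mathbf y|\mathbf H,\mathbf G)=\Omega-\Theta(N_{\min},N_{\max},\alpha\beta)$ with $\Omega$ the ergodic log-det of a one-sided correlated Rayleigh channel.

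The one place you overestimate the difficulty is your ``main obstacle'': \cite[Th.~1]{Chiani10} is stated and proved precisely for correlation matrices with eigenvalues of arbitrary multiplicity, so the confluent reduction you describe (the blocks indexed by $e_i$, derivative orders $d_i$, and shifted exponents $\varphi(i,j)$) is already built into that result. The paper simply applies it directly to the two-eigenvalue case $\mathrm{diag}(\{\theta_i^{-1}\})$ and reads off $K$ and $\mathbf R^{(k)}$; no additional limiting argument is needed on your part.
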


\begin{proof}
See Appendix A.
\end{proof}

Theorem \ref{Th_R1} gives the exact value of $\bar{R}_{\text{S}}$ for the AN
scheme, as a function of SNR$_{\text{B}}$ ($\alpha \gamma $), SNR$_{\text{E}%
} $ ($\alpha $), power allocation scheme ($\beta $), $N_{\text{A}}$, $N_{%
\text{B}}$ and $N_{\text{E}}$. Note that (\ref{RS_exact}) can be expressed
in terms of a finite number of incomplete Gamma functions, thus provides a
closed-form expression for $\bar{R}_{\text{S}}$.

\subsection{Asymptotic Instantaneous\emph{\ }Secrecy Rate}

The instantaneous secrecy rate $R_{\text{S}}$ in (\ref{CLB}) is a function
of $\mathbf{H}$, $\mathbf{V}$ and $\mathbf{G}$. Since we assumed that the
realizations of $\mathbf{H}$, $\mathbf{V}$ and $\mathbf{G}$ are known to
Eve, she is able to compute the exact value of $R_{\text{S}}$. However,
Alice only knows the realizations of $\mathbf{H}$ and $\mathbf{V}$, and
can only assume that $\mathbf{G}$ is a random matrix.
Therefore, $R_{\text{S}}$ is a random variable from Alice's perspective.
We will consider this point of view when designing the secure communications system
and we will use random matrix
theory to characterize the asymptotic behavior (in terms of number of antennas) of the \emph{normalized}
instantaneous secrecy rate $R_{\text{S}}/N_{\text{B}}$.
We then show by simulation that the asymptotic behavior is a very accurate approximation
even for very small numbers of antennas.

The following theorem proves that  $R_{\text{S}}/N_{\text{B}}$ converges to a constant value,
which depends only of the system parameters: $\alpha$, $\beta$, $\gamma$, $P_u$, $P_v$,
and the asymptotic number of antenna ratios.
A special case of this result for $\beta =1$ was given in \cite{Shuiyin13_2}. Here, we
provide a unified result for arbitrary $\beta $.

\begin{theo}
\label{Th_R3}As $N_{\text{A}}$, $N_{\text{B}}$, $N_{\text{A}}-N_{\text{B}}$
and $N_{\text{E}}\rightarrow \infty $ with $N_{\text{A}}/N_{\text{E}%
}\rightarrow \beta _{1}$, $N_{\text{A}}/N_{\text{B}}\rightarrow \beta _{2}$
and $N_{\text{B}}/N_{\text{E}}\rightarrow \beta _{3}$,%
\begin{eqnarray}
\frac{R_{\text{S}}}{N_{\text{B}}} &\overset{a.s.}{\rightarrow }& \Phi (P_{%
\text{u}}\text{, }\beta _{2})-\frac{1}{\beta _{3}}\left( \beta _{1}\mathcal{V%
}(\delta )-\log \delta +\delta -1\right)  \notag \\
&&+\frac{1}{\beta _{3}}\Phi \left( \frac{P_{\text{v}}}{\gamma (\beta
_{1}-\beta _{3})}\text{, }\beta _{1}-\beta _{3}\right) \;\;\triangleq \;\;\Psi \text{%
,}  \label{R3}
\end{eqnarray}%
where%
\begin{eqnarray}
\Phi \left( x\text{, }y\right) &\mathbf{=}&y\log \left( 1+x-\frac{1}{4}%
\mathcal{F}\left( x,y\right) \right) -\frac{\mathcal{F}\left( x,y\right) }{4x%
} \notag  \\
&&+\log \left( 1+xy-\frac{1}{4}\mathcal{F}\left( x,y\right) \right) \text{,}
\label{phi}
\end{eqnarray}%
\begin{equation}
\mathcal{F}\left( x,y\right) \mathbf{=}\left( \sqrt{x\left( 1+\sqrt{y}%
\right) ^{2}+1}-\sqrt{x\left( 1-\sqrt{y}\right) ^{2}+1}\right) ^{2}\text{,}
\label{F}
\end{equation}%
\begin{equation}
\mathcal{V}(\delta )=\frac{1}{\beta _{2}}\log \left( 1+\frac{\delta P_{\text{%
u}}}{\gamma \beta _{3}}\right) +\left( 1-\frac{1}{\beta _{2}}\right) \log
\left( 1+\frac{\delta P_{\text{v}}}{\gamma (\beta _{1}-\beta _{3})}\right)
\text{,} \label{V_delta}
\end{equation}%
and $\delta $ is the solution of the equation
\begin{equation}
\beta _{1}=\frac{1-\delta }{1-\eta (\delta )}\text{,} \label{eq_delta}
\end{equation}%
\begin{equation}
\eta (\delta )\mathbf{=}\frac{1}{\beta _{2}}\left( 1+\frac{\delta P_{\text{u}%
}}{\gamma \beta _{3}}\right) ^{-1}+\left( 1-\frac{1}{\beta _{2}}\right)
\left( 1+\frac{\delta P_{\text{v}}}{\gamma (\beta _{1}-\beta _{3})}\right)
^{-1}\text{.}
\label{eta_delta}
\end{equation}
\end{theo}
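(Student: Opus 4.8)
The plan is to go back to the definition (\ref{CLB}) of the instantaneous secrecy rate, evaluate the two mutual informations for the assumed Gaussian inputs as $\log\det$ expressions, and then apply the strong (almost sure) large-dimensional limit theorems of random matrix theory to each resulting term. From (\ref{sec_mod2})--(\ref{Eve_mod2}) and the Gaussian MIMO formula, using $\mathbf{H}\mathbf{V}_{1}\mathbf{V}_{1}^{H}\mathbf{H}^{H}=\mathbf{H}\mathbf{H}^{H}$ (because $\mathbf{H}\mathbf{Z}=\mathbf{0}$) and Sylvester's determinant identity, I would write
\begin{align*}
R_{\text{S}}&=T_{1}-T_{2}+T_{3},\quad T_{1}=\log\det(\mathbf{I}_{N_{\text{B}}}+\sigma_{\text{u}}^{2}\mathbf{H}\mathbf{H}^{H}),\\
T_{2}&=\log\det(\mathbf{I}_{N_{\text{E}}}+\alpha\mathbf{A}\mathbf{A}^{H}+\alpha\beta\mathbf{B}\mathbf{B}^{H}),\quad T_{3}=\log\det(\mathbf{I}_{N_{\text{E}}}+\alpha\beta\mathbf{B}\mathbf{B}^{H}),
\end{align*}
where $\mathbf{A}\triangleq\mathbf{G}\mathbf{V}_{1}\in\mathbb{C}^{N_{\text{E}}\times N_{\text{B}}}$ and $\mathbf{B}\triangleq\mathbf{G}\mathbf{Z}\in\mathbb{C}^{N_{\text{E}}\times(N_{\text{A}}-N_{\text{B}})}$. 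Since $\mathbf{V}=[\mathbf{V}_{1},\mathbf{Z}]$ is unitary and independent of $\mathbf{G}$, conditionally on $\mathbf{H}$ the matrices $\mathbf{A}$ and $\mathbf{B}$ are independent with i.i.d.\ $\mathcal{N}_{\mathbb{C}}(0,1)$ entries, whereas $T_{1}$ depends only on $\mathbf{H}$; so the a.s.\ limit can be obtained term by term and then added. Substituting $\sigma_{\text{u}}^{2}=P_{\text{u}}/N_{\text{B}}$, $\alpha=P_{\text{u}}/(\gamma N_{\text{B}})$ and $\alpha\beta=P_{\text{v}}/(\gamma(N_{\text{A}}-N_{\text{B}}))$ (from (\ref{Power_c2}) and the parameter definitions) converts the matrix normalizations into the system parameters.

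For the two single-Wishart terms, $T_{1}/N_{\text{B}}$ and $N_{\text{E}}^{-1}T_{3}$ are Shannon transforms of a single i.i.d.\ Gram matrix. Using the strong convergence of the empirical spectral distribution to the Marchenko--Pastur law, together with the a.s.\ boundedness of the largest eigenvalue (which lets one pass the nonnegative but unbounded integrand $\log(1+\cdot)$ through the weak limit by truncating at $\lambda_{\max}$), one gets $T_{1}/N_{\text{B}}\overset{a.s.}{\rightarrow}\Phi(P_{\text{u}},\beta_{2})$ and $N_{\text{E}}^{-1}T_{3}\overset{a.s.}{\rightarrow}\Phi(\tfrac{P_{\text{v}}}{\gamma(\beta_{1}-\beta_{3})},\beta_{1}-\beta_{3})$, where $\Phi$ and $\mathcal{F}$ of (\ref{phi})--(\ref{F}) are the classical closed form of the Marchenko--Pastur Shannon transform; the identification uses $N_{\text{A}}/N_{\text{B}}\to\beta_{2}$, $(N_{\text{A}}-N_{\text{B}})/N_{\text{E}}\to\beta_{1}-\beta_{3}$ and the elementary rescaling $\mathcal{V}_{\mathrm{MP}}(\rho,c)=\Phi(\rho c,1/c)$.

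The term $T_{2}$ is the main obstacle. Writing $\alpha\mathbf{A}\mathbf{A}^{H}+\alpha\beta\mathbf{B}\mathbf{B}^{H}=\mathbf{G}'\,\mathrm{diag}(\{\theta_{i}\}_{1}^{N_{\text{A}}})\,\mathbf{G}'^{H}$ with $\mathbf{G}'=[\mathbf{A},\mathbf{B}]$ i.i.d.\ Gaussian and $\theta_{i}$ as in (\ref{theta}), $N_{\text{E}}^{-1}T_{2}$ is the Shannon transform of a Gram matrix with a deterministic right-side weight having exactly two distinct values; after the $N_{\text{E}}^{-1/2}$ scaling these values tend to $P_{\text{u}}/(\gamma\beta_{3})$ and $P_{\text{v}}/(\gamma(\beta_{1}-\beta_{3}))$ with limiting fractions $1/\beta_{2}$ and $1-1/\beta_{2}$, i.e.\ a limiting weight distribution $\mu$, while the column-to-row ratio satisfies $N_{\text{A}}/N_{\text{E}}\to\beta_{1}$. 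The deterministic-equivalent theorem for $N^{-1}\log\det(\mathbf{I}+\mathbf{W}\mathbf{T}\mathbf{W}^{H})$ then yields $N_{\text{E}}^{-1}T_{2}\overset{a.s.}{\rightarrow}\beta_{1}\int\log(1+\delta\tau)\,d\mu(\tau)-\log\delta+\delta-1$, with $\delta$ the unique positive root of the scalar fixed point coming from the Stieltjes-transform relation; since $\int(1+\delta\tau)^{-1}d\mu(\tau)=\eta(\delta)$ and $\int\log(1+\delta\tau)\,d\mu(\tau)=\mathcal{V}(\delta)$, this root and limit are precisely (\ref{V_delta})--(\ref{eta_delta}) together with the constraint (\ref{eq_delta}). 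The work here is to verify the hypotheses of that theorem (independence, finite moments, convergence of the weight distribution, fixed ratios, $\beta_{1}>\beta_{3}$, $\beta_{2}>1$), to establish existence and uniqueness of $\delta$ so that the right-hand side is well defined, and to match the abstract fixed point to the explicit $\eta,\mathcal{V},\delta$; I would also perform a consistency check by collapsing $\mu$ to a single point mass, where the expression must reduce to $\Phi$.

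Finally, multiplying $N_{\text{E}}^{-1}T_{2}$ and $N_{\text{E}}^{-1}T_{3}$ by $N_{\text{E}}/N_{\text{B}}\to1/\beta_{3}$ and summing gives $R_{\text{S}}/N_{\text{B}}\overset{a.s.}{\rightarrow}\Phi(P_{\text{u}},\beta_{2})-\tfrac{1}{\beta_{3}}\left(\beta_{1}\mathcal{V}(\delta)-\log\delta+\delta-1\right)+\tfrac{1}{\beta_{3}}\Phi(\tfrac{P_{\text{v}}}{\gamma(\beta_{1}-\beta_{3})},\beta_{1}-\beta_{3})=\Psi$, which is (\ref{R3}); the almost-sure conclusion follows by combining the conditional-on-$\mathbf{H}$ convergence of the two Eve terms with the convergence of $T_{1}$.
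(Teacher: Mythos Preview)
Your proposal is correct and follows essentially the same route as the paper: the same three-term $\log\det$ decomposition, the same unitary-invariance argument making $\mathbf{A}=\mathbf{G}\mathbf{V}_{1}$ and $\mathbf{B}=\mathbf{G}\mathbf{Z}$ i.i.d.\ Gaussian given $\mathbf{H}$, the Marchenko--Pastur Shannon transform for $T_{1}$ and $T_{3}$, and the deterministic-equivalent result (the paper cites \cite[Th.~2.39]{Verdu04}) for $T_{2}$ after rewriting it as $\mathbf{G}'\,\mathrm{diag}(\{\theta_{i}\})\,\mathbf{G}'^{H}$. Your write-up is actually more careful than the paper's in flagging the side conditions (a.s.\ boundedness of $\lambda_{\max}$, existence/uniqueness of $\delta$, the conditional-on-$\mathbf{H}$ combination), which the paper simply absorbs into the citations.
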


\begin{proof}
See Appendix B.
\end{proof}

In the special case of $\beta =1$, according to the definitions of $P_{\text{u}}$ and $P_{\text{v}%
}$ given in Sec. II.C, we have%
\begin{equation}
\dfrac{P_{\text{u}}}{\gamma \beta _{3}}=\dfrac{P_{\text{v}}}{\gamma (\beta
_{1}-\beta _{3})}\text{.}  \label{Eq_beta}
\end{equation}%
According to \cite[Eq. 2.121]{Verdu04}, the explicit solution of (\ref{eq_delta})  is%
\begin{equation}
\delta =1-\dfrac{\mathcal{F}\left( \dfrac{P_{\text{u}}}{\gamma \beta _{3}}%
,\beta _{1}\right) }{\dfrac{4P_{\text{u}}}{\gamma \beta _{3}}}\text{.}
\label{delta}
\end{equation}%
By substituting (\ref{Eq_beta}) and (\ref{delta}) into (\ref{R3}), we have%
\begin{equation}
\dfrac{R_{\text{S}}}{N_{\text{B}}}\overset{a.s.}{\rightarrow }\Phi (P_{\text{%
u}}\text{, }\beta _{2})-\dfrac{\Phi (P_{\text{u}}/(\gamma \beta _{3})\text{,
}\beta _{1})}{\beta _{3}}+\dfrac{\Phi (P_{\text{u}}/(\gamma \beta _{3})\text{%
, }\beta _{1}-\beta _{3})}{\beta _{3}}\text{,}
\end{equation}%
which coincides with \cite[Th. 3]{Shuiyin13_2}.

\subsection{Asymptotic Approximation of Average Secrecy Rate}

Theorem \ref{Th_R3} shows that the random variable $R_{\text{S}}/N_{\text{B}}
$ converges almost surely to a constant $\Psi $ given in (\ref{R3}), as the number
of antennas at each terminal goes to infinity. Hence, also the average normalized secrecy rate
converges to the same constant, as stated in the following corollary.

\begin{corollary}
\label{col1}Under the same assumptions of Theorem \ref{Th_R3},%
\begin{equation}
\frac{\bar{R}_{\text{S}}}{N_{\text{B}}}\rightarrow \Psi \text{,}
\end{equation}%
where $\Psi $ is given in (\ref{R3}). Then we can write that%
\begin{equation}
\bar{R}_{\text{S}}\thickapprox N_{\text{B}}\Psi \triangleq \bar{R}_{\text{%
S,Asym}}\text{.}  \label{R_asym}
\end{equation}
for a sufficiently large number of antennas.
\end{corollary}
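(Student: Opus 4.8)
The plan is to upgrade the almost-sure convergence of Theorem~\ref{Th_R3} to convergence of the expectation. With Gaussian input alphabets the definitions (\ref{CLB})--(\ref{CLB_ave}) give $\bar{R}_{\text{S}}=\text{E}[R_{\text{S}}]$, the expectation being over $(\mathbf{H},\mathbf{G})$ (and $I(\mathbf{u};\mathbf{z})$ does not depend on $\mathbf{G}$), so $\bar{R}_{\text{S}}/N_{\text{B}}=\text{E}[R_{\text{S}}/N_{\text{B}}]$. Realizing all the growing channel matrices as nested sub-blocks of one doubly infinite array of i.i.d.\ $\mathcal{N}_{\mathbb{C}}(0,1)$ entries, Theorem~\ref{Th_R3} reads $R_{\text{S}}/N_{\text{B}}\overset{a.s.}{\rightarrow}\Psi$ along the prescribed scaling of antenna numbers. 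By the Vitali convergence theorem, $\text{E}[R_{\text{S}}/N_{\text{B}}]\rightarrow\Psi$ follows once the family $\{R_{\text{S}}/N_{\text{B}}\}$ is shown to be uniformly integrable along this scaling; since $R_{\text{S}}$ can take either sign, a bare Fatou argument does not suffice, so I would establish the stronger statement $\sup\text{E}[(R_{\text{S}}/N_{\text{B}})^{2}]<\infty$.

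For that bound, start from the Gaussian-input expressions $I(\mathbf{u};\mathbf{z})=\log\det(\mathbf{I}_{N_{\text{B}}}+\sigma_{\text{u}}^{2}\mathbf{H}\mathbf{V}_{1}\mathbf{V}_{1}^{H}\mathbf{H}^{H})$ and $I(\mathbf{u};\mathbf{y})=\log\det(\mathbf{I}_{N_{\text{E}}}+\sigma_{\text{u}}^{2}\mathbf{Q}^{-1}\mathbf{G}\mathbf{V}_{1}\mathbf{V}_{1}^{H}\mathbf{G}^{H})$ with $\mathbf{Q}=\sigma_{\text{v}}^{2}\mathbf{G}\mathbf{Z}\mathbf{Z}^{H}\mathbf{G}^{H}+\sigma_{\text{E}}^{2}\mathbf{I}_{N_{\text{E}}}$, both nonnegative, so $|R_{\text{S}}|/N_{\text{B}}\le I(\mathbf{u};\mathbf{z})/N_{\text{B}}+I(\mathbf{u};\mathbf{y})/N_{\text{B}}$. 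Now use the elementary estimate $\log\det(\mathbf{I}+\mathbf{B})\le r\,\log(1+\lambda_{\max}(\mathbf{B}))\le r\,\lambda_{\max}(\mathbf{B})$, with $r$ the number of nonzero eigenvalues of $\mathbf{B}\succeq0$: for $I(\mathbf{u};\mathbf{z})$ we have $r\le N_{\text{B}}$ and, by $\mathbf{V}_{1}\mathbf{V}_{1}^{H}\preceq\mathbf{I}$, $\lambda_{\max}(\mathbf{B})\le\sigma_{\text{u}}^{2}\lambda_{\max}(\mathbf{H}\mathbf{H}^{H})$; for $I(\mathbf{u};\mathbf{y})$ the matrix inside the determinant is similar to $\sigma_{\text{u}}^{2}\mathbf{Q}^{-1/2}\mathbf{G}\mathbf{V}_{1}\mathbf{V}_{1}^{H}\mathbf{G}^{H}\mathbf{Q}^{-1/2}$, which has rank at most $N_{\text{B}}$ and, using $\mathbf{Q}\succeq\sigma_{\text{E}}^{2}\mathbf{I}$, eigenvalues at most $\sigma_{\text{u}}^{2}\sigma_{\text{E}}^{-2}\lambda_{\max}(\mathbf{G}\mathbf{G}^{H})$. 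Since $P_{\text{u}}=\sigma_{\text{u}}^{2}N_{\text{B}}$ and $\sigma_{\text{E}}^{2}=\gamma$ are held fixed under the total-power normalization of Section~II, this gives
\begin{align}
\frac{I(\mathbf{u};\mathbf{z})}{N_{\text{B}}}\le P_{\text{u}}\,\frac{\lambda_{\max}(\mathbf{H}\mathbf{H}^{H})}{N_{\text{B}}},\qquad\frac{I(\mathbf{u};\mathbf{y})}{N_{\text{B}}}\le\frac{P_{\text{u}}}{\gamma}\,\frac{\lambda_{\max}(\mathbf{G}\mathbf{G}^{H})}{N_{\text{B}}}.\notag
\end{align}
Standard largest-eigenvalue concentration for complex Gaussian Wishart matrices (for a $p\times q$ Gaussian matrix, $\lambda_{\max}$ of its Gram matrix concentrates near $(\sqrt{p}+\sqrt{q})^{2}$ with exponential large-deviation tails) then shows that $\lambda_{\max}(\mathbf{H}\mathbf{H}^{H})/N_{\text{B}}$ and $\lambda_{\max}(\mathbf{G}\mathbf{G}^{H})/N_{\text{B}}$ have uniformly bounded second moments along the scaling, because $N_{\text{A}}/N_{\text{B}}$ and $N_{\text{E}}/N_{\text{B}}$ stay bounded. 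Hence $\sup\text{E}[(R_{\text{S}}/N_{\text{B}})^{2}]<\infty$, the family is uniformly integrable, and $\bar{R}_{\text{S}}/N_{\text{B}}=\text{E}[R_{\text{S}}/N_{\text{B}}]\to\Psi$; the approximation (\ref{R_asym}) is the corresponding finite-$N_{\text{B}}$ restatement.

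The step I expect to be the main obstacle is the uniform $L^{2}$ control of $I(\mathbf{u};\mathbf{y})/N_{\text{B}}$: $N_{\text{E}}$ is unbounded and $\mathbf{G}\mathbf{Z}$ may be ill-conditioned, so the naive bound $\log\det(\mathbf{I}+\mathbf{B})\le\mathrm{tr}(\mathbf{B})$ applied to $\mathbf{Q}^{-1}\mathbf{G}\mathbf{G}^{H}$ is useless (it scales like $N_{\text{B}}$). What makes the argument close is (i) that the information-bearing matrix $\mathbf{G}\mathbf{V}_{1}\mathbf{V}_{1}^{H}\mathbf{G}^{H}$ has rank at most $N_{\text{B}}$, so at most $O(N_{\text{B}})$ $\log$-terms ever contribute, and (ii) that the fixed-total-power normalization keeps the per-mode effective SNR of order one on the typical event, the exponential tail of $\lambda_{\max}(\mathbf{G}\mathbf{G}^{H})$ absorbing the atypical realizations. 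An alternative, essentially equivalent route would be to take the large-matrix limit termwise in the closed form of Theorem~\ref{Th_R1}, using the known asymptotics of $\Theta(m,n,x)$ (the Shannon transform of the Marchenko--Pastur law) together with the corresponding limit of $\Omega$, and checking that the result reproduces $\Psi$; but the uniform-integrability route is shorter and reuses Theorem~\ref{Th_R3} directly.
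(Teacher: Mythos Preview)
Your argument is correct and, in fact, considerably more careful than the paper's own treatment: the paper's entire proof of this corollary is the sentence ``The proof is straightforward,'' implicitly taking for granted that the almost-sure convergence $R_{\text{S}}/N_{\text{B}}\overset{a.s.}{\rightarrow}\Psi$ of Theorem~\ref{Th_R3} immediately transfers to the expectation $\bar{R}_{\text{S}}/N_{\text{B}}=\text{E}[R_{\text{S}}/N_{\text{B}}]$. You are right that this step is not automatic and requires a uniform-integrability (or domination) argument, which the paper omits. Your route via Vitali and a uniform $L^{2}$ bound, using the rank-$N_{\text{B}}$ structure of the information-bearing term together with sub-exponential tails of $\lambda_{\max}$ for Wishart matrices, is a clean way to close that gap; the alternative you mention, taking limits directly in the closed form of Theorem~\ref{Th_R1}, would also work but is heavier. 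In short, the paper treats the corollary as a triviality, while you supply the missing analytic justification; both arrive at the same conclusion, but yours is the one that actually proves it.
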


\begin{proof}
The proof is straightforward.
\end{proof}

Corollary \ref{col1} provides an alternative way to evaluate $\bar{R}_{\text{%
S}}$. To use the asymptotic approximation $\bar{R}_{\text{S,Asym}}$ for a
finite system model, we substitute in $\Psi $%
\begin{equation}
\beta _{1}=N_{\text{A}}/N_{\text{E}}\text{, }\beta _{2}=N_{\text{A}}/N_{%
\text{B}}\text{, and }\beta _{3}=N_{\text{B}}/N_{\text{E}}\text{.}
\label{Finite}
\end{equation}

\begin{remark}
For finite system models, we verified by simulations that the difference between $\bar{R}_{%
\text{S}}$ and $\bar{R}_{\text{S,Asym}}$ is indistinguishable if%
\begin{equation}
\min \left\{ N_{\text{A}},N_{\text{B}},N_{\text{A}}-N_{\text{B}},N_{\text{E}%
}\right\} >2.
\end{equation}
\end{remark}

\begin{example}
Let us apply Theorem \ref{Th_R1} and Corollary \ref{col1} to the analysis of
an AN scheme with $N_{\text{A}}=6$, $N_{\text{B}}=3$ and $\alpha =3$ dB. Fig.%
$~$1 shows the value of $\bar{R}_{\text{S}}$ with $\beta =-3$ dB, as a
function of $\gamma $ and $N_{\text{E}}$. In Fig.$~$2, we fix $\gamma =3$ dB
and verify the relationship between $\beta $, $N_{\text{E}}$ and $\bar{R}_{%
\text{S}}$. Both figures show an excellent agreement between the
theoretically derived $\bar{R}_{\text{S}}$ and Monte Carlo simulation and
the asymptotic approximation $\bar{R}_{\text{S,Asym}}$.

Moreover, Fig.$~$1 shows how $\bar{R}_{\text{S}}$ increases with
increasing $\gamma $ and decreases with increasing $N_{\text{E}}$. In Fig.$~$%
2, we observe that increasing $\beta $ (i.e., increasing AN power) has
little effect on increasing $\bar{R}_{\text{S}}$ when $N_{\text{E}}>N_{\text{%
A}}$.
\end{example}

\begin{figure}[tbp]
\centering\includegraphics[scale=0.55]{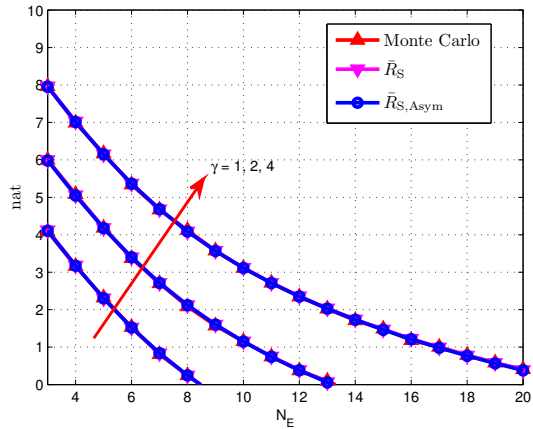}
\caption{$\bar{R}_{\text{S}}$ vs. $\protect\gamma $ and $N_{\text{E}}$ with $%
\protect\alpha =3$ dB, $\protect\beta =-3$ dB, $N_{\text{A}}=6$ and $N_{%
\text{B}}=3$. }
\end{figure}
\begin{figure}[tbp]
\centering\includegraphics[scale=0.55]{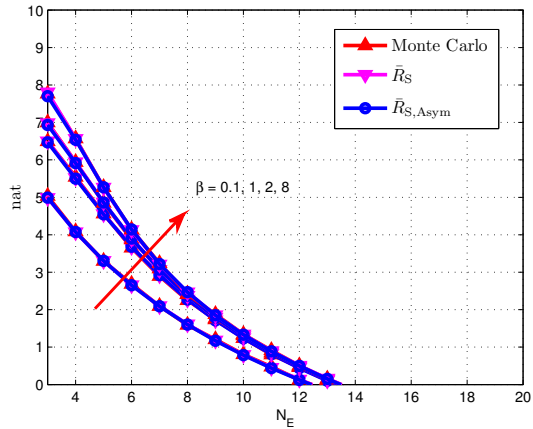}
\caption{$\bar{R}_{\text{S}}$ vs. $\protect\beta $ and $N_{\text{E}}$ with $%
\protect\alpha =3$ dB, $\protect\gamma =3$ dB, $N_{\text{A}}=6$ and $N_{%
\text{B}}=3$. }
\end{figure}

\subsection{Achieving Average Secrecy Capacity}

The following theorem gives a sufficient condition for the achievability of the average secrecy capacity (%
\ref{CS_ave}) using Gaussian input alphabets.

\begin{theo}
\label{Th_3}If $N_{\text{E}}\leq N_{\text{A}}-N_{\text{B}}$, as $\alpha $, $%
\beta \rightarrow \infty $, then%
\begin{equation}
\bar{C}_{\text{S}}=\bar{R}_{\text{S}}=\bar{C}_{\text{Bob}}\text{,}
\label{t3}
\end{equation}%
where $\bar{C}_{\text{Bob}}$ represents Bob's average channel capacity.
\end{theo}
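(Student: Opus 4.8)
The plan is to prove the two equalities in (\ref{t3}) separately in the limit: first that the information leaked to Eve is driven to zero, so that $\bar{R}_{\text{S}}$ collapses onto Bob's rate, and then to sandwich $\bar{C}_{\text{S}}$ between $\bar{R}_{\text{S}}$ and $\bar{C}_{\text{Bob}}$. For the leakage term, note that with the Gaussian codebook the term $\mathbf{GZv}$ in (\ref{Eve_mod2}) is Gaussian interference with covariance $\sigma_{\text{v}}^{2}\mathbf{GZ}(\mathbf{GZ})^{H}$, so
\[
I(\mathbf{u};\mathbf{y}|\mathbf{H},\mathbf{G})=\text{E}\log\det\!\left(\mathbf{I}_{N_{\text{E}}}+\sigma_{\text{u}}^{2}\bigl(\sigma_{\text{v}}^{2}\mathbf{GZ}(\mathbf{GZ})^{H}+\sigma_{\text{E}}^{2}\mathbf{I}_{N_{\text{E}}}\bigr)^{-1}\mathbf{GV}_{1}(\mathbf{GV}_{1})^{H}\right).
\]
Since $\mathbf{V}$ is unitary, $\mathbf{GZ}$ is an $N_{\text{E}}\times(N_{\text{A}}-N_{\text{B}})$ matrix with i.i.d.\ $\mathcal{N}_{\mathbb{C}}(0,1)$ entries, independent of $\mathbf{GV}_{1}$; the hypothesis $N_{\text{E}}\le N_{\text{A}}-N_{\text{B}}$ is exactly what makes it have full row rank almost surely, so $\mathbf{GZ}(\mathbf{GZ})^{H}\succ\mathbf{0}$ a.s. Writing $\lambda_{\min}$ for its smallest eigenvalue and using $\sigma_{\text{u}}^{2}=\alpha\gamma$, $\sigma_{\text{v}}^{2}=\alpha\beta\gamma$, $\sigma_{\text{E}}^{2}=\gamma$, the eigenvalues of $\sigma_{\text{u}}^{2}(\sigma_{\text{v}}^{2}\mathbf{GZ}(\mathbf{GZ})^{H}+\sigma_{\text{E}}^{2}\mathbf{I}_{N_{\text{E}}})^{-1}$ are at most $1/(\beta\lambda_{\min})$, hence the conditional mutual information is bounded by $N_{\text{E}}\log\!\bigl(1+\norm{\mathbf{GV}_{1}}^{2}/(\beta\lambda_{\min})\bigr)$, which tends to $0$ almost surely as $\beta\to\infty$.

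To pass the limit inside $\text{E}_{\mathbf{H},\mathbf{G}}[\cdot]$ I would, for fixed $\alpha$, dominate the conditional mutual information by the quantity obtained when $\mathbf{v}$ is genie-aided to Eve, i.e.\ by $\log\det(\mathbf{I}_{N_{\text{E}}}+\alpha\mathbf{GV}_{1}(\mathbf{GV}_{1})^{H})$; this upper bound does not involve $\beta$ and has finite expectation (log-determinants of the identity plus a Wishart matrix possess all moments), so dominated convergence gives $I(\mathbf{u};\mathbf{y}|\mathbf{H},\mathbf{G})\to0$. Because $\mathbf{HZ}=\mathbf{0}$ the artificial noise never reaches Bob, so $\bar{R}_{\text{S}}=I(\mathbf{u};\mathbf{z}|\mathbf{H})-I(\mathbf{u};\mathbf{y}|\mathbf{H},\mathbf{G})\to I(\mathbf{u};\mathbf{z}|\mathbf{H})$, the average mutual information of the Alice--Bob link (\ref{sec_mod2}) under the equal-power Gaussian input. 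As $\alpha\to\infty$ we also have $P_{\text{u}}=\alpha\gamma N_{\text{B}}\to\infty$, and at high SNR the water-filling allocation over the $N_{\text{B}}$ eigenmodes of $\mathbf{HV}_{1}$ approaches the equal-power allocation with a mutual-information gap that is $O(1/P_{\text{u}})$; therefore $I(\mathbf{u};\mathbf{z}|\mathbf{H})\to\bar{C}_{\text{Bob}}$ and $\bar{R}_{\text{S}}\to\bar{C}_{\text{Bob}}$.

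It remains to squeeze $\bar{C}_{\text{S}}$. The Gaussian input is feasible, so $\bar{R}_{\text{S}}\le\bar{C}_{\text{S}}$; and for every feasible input $I(\mathbf{u};\mathbf{z}|\mathbf{H})-I(\mathbf{u};\mathbf{y}|\mathbf{H},\mathbf{G})\le I(\mathbf{u};\mathbf{z}|\mathbf{H})\le\bar{C}_{\text{Bob}}$ by non-negativity of the mutual information, so $\bar{C}_{\text{S}}\le\bar{C}_{\text{Bob}}$. Since the two bounding quantities converge to the same limit by the previous step, so does $\bar{C}_{\text{S}}$, which is (\ref{t3}). I expect the first step to be the real work: establishing the almost-sure invertibility of Eve's interference covariance precisely when $N_{\text{E}}\le N_{\text{A}}-N_{\text{B}}$, and above all justifying the interchange of $\beta\to\infty$ with the expectation over $(\mathbf{H},\mathbf{G})$ in spite of the singularity $\lambda_{\min}\downarrow0$ --- the cleanest route being to send $\beta\to\infty$ first with $\alpha$ fixed, using the $\beta$-free genie-aided dominating bound, and only afterwards let $\alpha\to\infty$. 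The high-SNR ``uniform power is asymptotically optimal'' fact used for Bob's link is classical but must be invoked with the usual care about the power offset.
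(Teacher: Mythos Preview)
Your argument is correct and shares the paper's three-step skeleton (drive Eve's leakage to zero, identify Bob's rate with $\bar{C}_{\text{Bob}}$, squeeze $\bar{C}_{\text{S}}$), but the execution of the first two steps differs. For the leakage, the paper does not use almost-sure invertibility of $\mathbf{W}_2$ or dominated convergence; instead it applies Jensen's inequality to the concave map $\mathbf{W}_1\mapsto\log\det(\,\cdot\,)$ to average out $\mathbf{W}_1=\mathbf{G}_1\mathbf{G}_1^{H}$, obtains a closed-form upper bound in terms of the $\Theta$-function, and reads off the asymptotics $(N_{\text{E}}-N_{\min})\log(\alpha N_{\text{B}})+O(1/\alpha)+O(1/\beta)$, so the hypothesis $N_{\text{E}}\le N_{\text{A}}-N_{\text{B}}$ enters simply as $N_{\min}=N_{\text{E}}$, which kills the leading term. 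Your eigenvalue/DCT route is more elementary (no $\Theta$-machinery) and makes the structural role of the hypothesis---full row rank of $\mathbf{GZ}$---more transparent, while the paper's route trades that for explicit $O(\cdot)$ control of the vanishing rate. For Bob's rate you take an unnecessary detour: the paper just invokes Telatar's result that isotropic Gaussian $\mathbf{u}$ achieves the ergodic capacity of the i.i.d.\ Rayleigh channel, so $I(\mathbf{u};\mathbf{z}|\mathbf{H})=\bar{C}_{\text{Bob}}$ holds \emph{exactly} for every $\alpha$, not only in the high-SNR limit; the water-filling discussion can be dropped.
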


\begin{proof}
See Appendix C.
\end{proof}

According to (\ref{CS_ave}), a universal upper bound on the average MIMO
secrecy capacity is given by%
\begin{equation}
\bar{C}_{\text{S}}\leq \max_{p\left( \mathbf{u}\right) }\left\{ I(\mathbf{%
u;z|H)}\right\} =\bar{C}_{\text{Bob}}\text{.}  \label{t31}
\end{equation}

\begin{remark}
Combining (\ref{t3}) and (\ref{t31}), we can show that the
 \emph{maximum} average MIMO secrecy capacity is achieved by using
the AN transmission scheme and Gaussian input alphabets, if $N_{\text{E}}$ is
not larger that $N_A-N_B$.
\end{remark}

\section{Lower and Upper bounds on Secrecy Rate}

To guarantee positive average/instantaneous secrecy rate, in this section,
we present simple sufficient and necessary conditions, based upon lower and
upper bounds on the average/instantaneous secrecy rate using Gaussian input
alphabets.

\subsection{Bounds on Average Secrecy Rate}

The following theorem bounds $\bar{R}_{\text{S}}$ given in (\ref{RS_exact}).

\begin{theo}
\label{Th_4}%
\begin{equation}
\bar{R}_{\text{LB}}\leq \bar{R}_{\text{S}}\leq \bar{R}_{\text{UB}}\text{,}
\label{AR_B}
\end{equation}%
where the equality holds if $\beta =1$,%
\begin{eqnarray}
\bar{R}_{\text{LB}} &=&\Theta (N_{\text{B}},N_{\text{A}},\alpha \gamma
)+\Theta (N_{\min },N_{\max },\alpha \beta )  \notag \\
&&-\Theta (\hat{N}_{\min },\hat{N}_{\max },\theta _{\max })\text{,}
\label{R_LB}
\end{eqnarray}%
\begin{eqnarray}
\bar{R}_{\text{UB}} &=&\Theta (N_{\text{B}},N_{\text{A}},\alpha \gamma
)+\Theta (N_{\min },N_{\max },\alpha \beta )  \notag \\
&&-\Theta (\hat{N}_{\min },\hat{N}_{\max },\theta _{\min })\text{,}
\label{R_UB}
\end{eqnarray}%
\begin{eqnarray}
\theta _{\min } & \triangleq &\min \{\alpha ,\alpha \beta \}\text{,}  \notag \\
\theta _{\max } & \triangleq &\max \{\alpha ,\alpha \beta \}\text{,}
\end{eqnarray}%
and $\Theta (\cdot ,\cdot ,\cdot )$ is given in (\ref{Cap_CF}).
\end{theo}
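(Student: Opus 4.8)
The plan is to reduce the bound to a monotonicity property of the subtracted term $\Omega$ in (\ref{RS_exact}), worked not through its closed form (\ref{PPP})--(\ref{R}) but through the integral representation behind it. The first step is to recall from Appendix~A that, whether $\beta=1$ or $\beta\neq1$, the quantity $\Omega$ equals the ergodic mutual information
\begin{equation}
\Omega=\text{E}\!\left[\log\det\!\left(\mathbf{I}_{N_{\text{E}}}+\widetilde{\mathbf{G}}\,\mathrm{diag}\!\left(\{\theta_i\}_1^{N_{\text{A}}}\right)\widetilde{\mathbf{G}}^{H}\right)\right],
\end{equation}
where $\widetilde{\mathbf{G}}\triangleq\mathbf{GV}\in\mathbb{C}^{N_{\text{E}}\times N_{\text{A}}}$ has i.i.d.\ $\mathcal{N}_{\mathbb{C}}(0,1)$ entries (by unitary invariance of the Gaussian law, since $\mathbf{V}$ is unitary) and $\{\theta_i\}$ is the power profile (\ref{theta}). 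This is exactly what Appendix~A exhibits on the way to (\ref{RS_exact}): writing $\widetilde{\mathbf{G}}=[\,\mathbf{GV}_1\;\;\mathbf{GZ}\,]$, one has $I(\mathbf{u};\mathbf{z}|\mathbf{H})=\Theta(N_{\text{B}},N_{\text{A}},\alpha\gamma)$ and $I(\mathbf{u};\mathbf{y}|\mathbf{H},\mathbf{G})=\text{E}\log\det(\mathbf{I}+\alpha\,\mathbf{GV}_1\mathbf{V}_1^{H}\mathbf{G}^{H}+\alpha\beta\,\mathbf{GZZ}^{H}\mathbf{G}^{H})-\Theta(N_{\min},N_{\max},\alpha\beta)$, and $\alpha\,\mathbf{GV}_1\mathbf{V}_1^{H}\mathbf{G}^{H}+\alpha\beta\,\mathbf{GZZ}^{H}\mathbf{G}^{H}=\widetilde{\mathbf{G}}\,\mathrm{diag}(\{\theta_i\})\widetilde{\mathbf{G}}^{H}$; the two branches of (\ref{PPP}) are merely two closed forms of the resulting expectation, namely the $\Theta$-function of \cite[Th.~1]{shin03} when the $\theta_i$ coincide and the determinantal formula of \cite[Th.~1]{Chiani10} when they take two distinct values. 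So no new computation is needed here.

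The second step is a positive-semidefinite sandwich. Since every $\theta_i\in\{\alpha,\alpha\beta\}$, we have $\theta_{\min}\mathbf{I}_{N_{\text{A}}}\preceq\mathrm{diag}(\{\theta_i\})\preceq\theta_{\max}\mathbf{I}_{N_{\text{A}}}$; multiplying on the left by $\widetilde{\mathbf{G}}$ and on the right by $\widetilde{\mathbf{G}}^{H}$ preserves this order, so $\theta_{\min}\widetilde{\mathbf{G}}\widetilde{\mathbf{G}}^{H}\preceq\widetilde{\mathbf{G}}\,\mathrm{diag}(\{\theta_i\})\widetilde{\mathbf{G}}^{H}\preceq\theta_{\max}\widetilde{\mathbf{G}}\widetilde{\mathbf{G}}^{H}$. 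Because $\mathbf{A}\preceq\mathbf{B}$ forces $\lambda_k(\mathbf{A})\le\lambda_k(\mathbf{B})$ for all $k$, and $\log\det(\mathbf{I}+\cdot)$ is increasing in each eigenvalue, applying $\log\det(\mathbf{I}_{N_{\text{E}}}+\cdot)$ and then $\text{E}[\cdot]$ to the three terms yields $\text{E}\log\det(\mathbf{I}_{N_{\text{E}}}+\theta_{\min}\widetilde{\mathbf{G}}\widetilde{\mathbf{G}}^{H})\le\Omega\le\text{E}\log\det(\mathbf{I}_{N_{\text{E}}}+\theta_{\max}\widetilde{\mathbf{G}}\widetilde{\mathbf{G}}^{H})$.

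The third step identifies the two bounding quantities with $\Theta$: $\widetilde{\mathbf{G}}$ is $N_{\text{E}}\times N_{\text{A}}$ i.i.d.\ Gaussian, so $\widetilde{\mathbf{G}}\widetilde{\mathbf{G}}^{H}$ is a central complex Wishart matrix, and by the definition of $\Theta$ in (\ref{Cap_CF}) (\cite[Th.~1]{shin03}), together with $\log\det(\mathbf{I}+\mathbf{AB})=\log\det(\mathbf{I}+\mathbf{BA})$ to order the dimensions, $\text{E}\log\det(\mathbf{I}_{N_{\text{E}}}+\theta\,\widetilde{\mathbf{G}}\widetilde{\mathbf{G}}^{H})=\Theta(\hat{N}_{\min},\hat{N}_{\max},\theta)$ for every $\theta>0$. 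Hence $\Theta(\hat{N}_{\min},\hat{N}_{\max},\theta_{\min})\le\Omega\le\Theta(\hat{N}_{\min},\hat{N}_{\max},\theta_{\max})$, and substituting into (\ref{RS_exact}) yields (\ref{AR_B}) with the claimed $\bar{R}_{\text{LB}}$ and $\bar{R}_{\text{UB}}$. When $\beta=1$ we have $\theta_{\min}=\theta_{\max}=\alpha$, so the sandwich collapses, $\bar{R}_{\text{LB}}=\bar{R}_{\text{UB}}$, and both equal $\bar{R}_{\text{S}}$, matching the $\beta=1$ branch of (\ref{PPP}); this is the equality statement.

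I do not expect a serious obstacle. The one point that requires care is the first step: arguing cleanly that the unwieldy determinantal expression (\ref{PPP})--(\ref{R}) for $\beta\neq1$ genuinely \emph{is} the expectation $\text{E}\log\det(\mathbf{I}+\widetilde{\mathbf{G}}\,\mathrm{diag}(\{\theta_i\})\widetilde{\mathbf{G}}^{H})$, so that the monotonicity argument may be carried out on the random matrix itself rather than on its closed form. Once that identification is in hand, the remainder is the elementary fact that ergodic MIMO mutual information is monotone in the transmit covariance with respect to the positive-semidefinite order.
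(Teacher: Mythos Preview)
Your proposal is correct and follows essentially the same approach as the paper's proof. The paper works with the decomposition $\alpha\mathbf{W}_1+\alpha\beta\mathbf{W}_2$ and bounds it between $\theta_{\min}(\mathbf{W}_1+\mathbf{W}_2)=\theta_{\min}\mathbf{G}_3\mathbf{G}_3^H$ and $\theta_{\max}\mathbf{G}_3\mathbf{G}_3^H$ using positive semidefiniteness, which is exactly your sandwich $\theta_{\min}\widetilde{\mathbf{G}}\widetilde{\mathbf{G}}^H\preceq\widetilde{\mathbf{G}}\,\mathrm{diag}(\{\theta_i\})\widetilde{\mathbf{G}}^H\preceq\theta_{\max}\widetilde{\mathbf{G}}\widetilde{\mathbf{G}}^H$ written in slightly different notation (since $\widetilde{\mathbf{G}}=\mathbf{G}_3$ and $\widetilde{\mathbf{G}}\,\mathrm{diag}(\{\theta_i\})\widetilde{\mathbf{G}}^H=\alpha\mathbf{W}_1+\alpha\beta\mathbf{W}_2$); the identification of $\Omega$ with the expectation and of the bounding terms with $\Theta(\hat{N}_{\min},\hat{N}_{\max},\cdot)$ via \cite{shin03} is likewise identical.
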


\begin{proof}
See Appendix D.
\end{proof}

\begin{example}
Fig.~3 compares the values of $\bar{R}_{\text{S}}$, $\bar{R}_{\text{UB}}$, $%
\bar{R}_{\text{LB}}$ as functions of $N_{\text{E}}$ with $\alpha =3$ dB, $%
\beta =1$ dB, $\gamma =6$ dB, $N_{\text{B}}=3$ and $N_{\text{A}}=4$.
Note that the upper and lower bounds are become tighter as $\beta$ approaches 0 dB.
\end{example}

\begin{figure}[tbp]
\centering\includegraphics[scale=0.55]{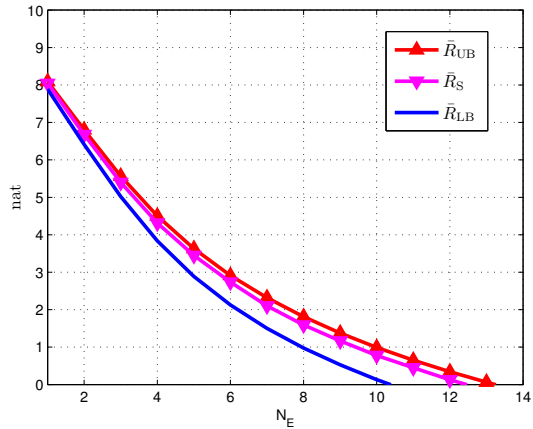}
\caption{$\bar{R}_{\text{S}}$ and $\bar{R}_{\text{LB}}$ vs. $N_{\text{E}}$
with $\protect\alpha =3$ dB, $\protect\beta =1$ dB, $\protect\gamma =6$ dB, $%
N_{\text{B}}=3$ and $N_{\text{A}}=4$.}
\end{figure}

\subsection{Bounds on Instantaneous Secrecy Rate}

We then provide lower and upper bounds on the instantaneous secrecy rate $R_{%
\text{S}}$ in (\ref{CLB}) for high SNR$_{\text{B}}$.

\begin{theo}
\label{Th_5}
Let $N_{\text{A}}$, $N_{\text{B}}$, $N_{\text{A}}-N_{\text{B}}$
and $N_{\text{E}}\rightarrow \infty $ with $N_{\text{A}}/N_{\text{E}%
}\rightarrow \beta _{1}$, $N_{\text{A}}/N_{\text{B}}\rightarrow
\beta _{2}$, $N_{\text{B}}/N_{\text{E}}\rightarrow \beta _{3}$, and
let $P_u,P_v\rightarrow \infty $, then almost surely (i.e., with
probability one)
\begin{eqnarray}
R_{\text{S}}/N_{%
\text{B}} &\geq &\Delta (A_{\max }) \text{,} \notag \\
R_{\text{S}}/N_{%
\text{B}} &\leq & \Delta (A_{\min }) \text{,}  \label{ISR_BOUNDS}
\end{eqnarray}%
where the equality (i.e., almost sure convergence to $\Delta(\cdot)$) holds if $\beta =1$,%
\begin{eqnarray}
\rho & \triangleq &\beta _{1}-\beta _{3}\text{,}  \notag \\
A_{\min } & \triangleq &\min \left\{ \frac{P_{\text{v}}}{\gamma \rho},\dfrac{P_{\text{u}%
}}{\gamma \beta _{3}}\right\} \text{,}  \notag \\
A_{\max } & \triangleq &\max \left\{ \dfrac{P_{\text{v}}}{\gamma \rho},\dfrac{P_{\text{u%
}}}{\gamma \beta _{3}}\right\} \text{,}
\end{eqnarray}%
\begin{equation}
\Delta (x) \triangleq \left( \log P_{\text{u}}\beta _{2}-\left( \beta _{2}-1\right)
\log \left( 1-\beta _{2}^{-1}\right) -1\right) -\digamma (x)+\Xi \text{,}
\label{ddd}
\end{equation}%
\begin{equation}
\digamma (x)  \triangleq \left\{ \!\!\!%
\begin{array}{l}
\beta _{2}\left( \log x-\dfrac{1-\beta _{1}}{\beta _{1}}\log \left( 1-\beta
_{1}\right) -1\right) \text{,}~~~~~~~~~\beta _{1}\leq 1 \\
\beta _{3}^{-1}\left( \log x\beta _{1}-\left( \beta _{1}-1\right) \log
\left( 1-\beta _{1}^{-1}\right) -1\right) \text{,}~\beta _{1}>1%
\end{array}%
\right.  \label{FF}
\end{equation}%
\begin{equation}
\Xi  \triangleq \left\{ \!\!\!%
\begin{array}{l}
\left( \beta _{2}-1\right) \!\!\left( \!\log \dfrac{P_{\text{v}}}{\gamma \rho%
}-\dfrac{1-\rho}{\rho}\log \left( 1-\rho\right) -1\!\!\right) \text{,}%
~~~~~~~~~~~\rho\leq 1 \\
\beta _{3}^{-1}\left( \log \dfrac{P_{\text{v}}}{\gamma }-(\rho-1)\log \left(
1-\rho^{-1}\right) -1\right) \text{.}~~~~~~~~\rho>1%
\end{array}%
\right.
\end{equation}
\end{theo}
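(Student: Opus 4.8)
The plan is to pass to the high-SNR regime $P_{\text{u}},P_{\text{v}}\to\infty$ inside the almost-sure limit $\Psi$ of $R_{\text{S}}/N_{\text{B}}$ established in Theorem~\ref{Th_R3}. Since that theorem already gives $R_{\text{S}}/N_{\text{B}}\overset{a.s.}{\rightarrow}\Psi$ in the large-antenna regime, it is enough to bound $\Psi$ from (\ref{R3}); concretely, I will show that $\Delta(A_{\max})\le\Psi\le\Delta(A_{\min})$ up to a vanishing term as $P_{\text{u}},P_{\text{v}}\to\infty$, the two bounds collapsing when $\beta=1$. Throughout I use the antenna-ratio identities $\beta_{1}/\beta_{3}=\beta_{2}$ and $(\beta_{1}-\beta_{3})/\beta_{3}=\beta_{2}-1$, immediate from $\beta_{1}=N_{\text{A}}/N_{\text{E}}$, $\beta_{2}=N_{\text{A}}/N_{\text{B}}$, $\beta_{3}=N_{\text{B}}/N_{\text{E}}$.

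First I would record the high-SNR expansion of $\Phi$. A Taylor expansion of $\mathcal{F}(x,y)$ in (\ref{F}) in powers of $1/\sqrt{x}$ gives $\tfrac{1}{4}\mathcal{F}(x,y)=xy-\tfrac{y}{1-y}+o(1)$ for $y<1$ and $\tfrac{1}{4}\mathcal{F}(x,y)=x-\tfrac{1}{y-1}+o(1)$ for $y>1$, so that $\tfrac{1}{4x}\mathcal{F}(x,y)\to\min(1,y)$ while the $O(1)$ correction survives inside one of the logarithms of (\ref{phi}). Substituting into (\ref{phi}),
\[
\Phi(x,y)=\begin{cases} y\log x-(1-y)\log(1-y)-y+o(1), & y<1,\\ \log(xy)-(y-1)\log(1-y^{-1})-1+o(1), & y>1.\end{cases}
\]
Applying the $y>1$ branch to the first summand $\Phi(P_{\text{u}},\beta_{2})$ of (\ref{R3}) (here $\beta_{2}=N_{\text{A}}/N_{\text{B}}>1$ always) reproduces the leading bracket $\log P_{\text{u}}\beta_{2}-(\beta_{2}-1)\log(1-\beta_{2}^{-1})-1$ of (\ref{ddd}); applying the expansion to the third summand $\tfrac{1}{\beta_{3}}\Phi\big(\tfrac{P_{\text{v}}}{\gamma(\beta_{1}-\beta_{3})},\beta_{1}-\beta_{3}\big)$, with the case split according to whether $\rho=\beta_{1}-\beta_{3}$ is at most $1$ or larger, and using $\rho/\beta_{3}=\beta_{2}-1$, reproduces exactly the term $\Xi$. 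These two contributions are identical in the claimed lower and upper bounds, as they must be.

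The remaining, and hardest, part is the middle summand $T\triangleq\beta_{1}\mathcal{V}(\delta)-\log\delta+\delta-1$. The difficulty is that, unlike the $\beta=1$ case where (\ref{Eq_beta}) collapses the two arguments $P_{\text{u}}/(\gamma\beta_{3})$ and $P_{\text{v}}/(\gamma\rho)$ of $\mathcal{V}$ and $\eta$ into a single variable, for $\beta\neq1$ the fixed point $\delta$ of (\ref{eq_delta})--(\ref{eta_delta}) solves a genuinely two-parameter transcendental equation with no closed form. I would get around this by a sandwich argument: $T$ depends on $(P_{\text{u}},P_{\text{v}})$ only through $A_{\text{u}}=P_{\text{u}}/(\gamma\beta_{3})$ and $A_{\text{v}}=P_{\text{v}}/(\gamma\rho)$, and $T$ is monotone nondecreasing in each of them (up to the normalization $1/\beta_{3}$, $T$ is Eve's asymptotic mutual information about the joint signal-plus-artificial-noise stream, which cannot decrease when a transmit power is raised; within the deterministic-equivalent framework this is verified by differentiating $T$ through its stationary point in $\delta$). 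Hence $T(A_{\min},A_{\min})\le T\le T(A_{\max},A_{\max})$, and it remains to evaluate $T(x,x)$. With both arguments equal to $x$, (\ref{V_delta}) and (\ref{eta_delta}) reduce to $\mathcal{V}(\delta)=\log(1+\delta x)$ and $\eta(\delta)=(1+\delta x)^{-1}$, so (\ref{eq_delta}) becomes the quadratic $x\delta^{2}+(\beta_{1}-1)x\delta+\delta-1=0$; its positive root satisfies $\delta\to1-\beta_{1}$ when $\beta_{1}<1$ and $\delta\to0$ with $\delta x\to(\beta_{1}-1)^{-1}$ when $\beta_{1}>1$. Substituting into $T$ and using $\beta_{1}/\beta_{3}=\beta_{2}$ yields $\tfrac{1}{\beta_{3}}T(x,x)\to\digamma(x)$, matching exactly the two branches of (\ref{FF}) (the boundary case $\beta_{1}=1$ following by continuity).

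Assembling the three pieces, $\Psi=\big(\log P_{\text{u}}\beta_{2}-(\beta_{2}-1)\log(1-\beta_{2}^{-1})-1\big)-\tfrac{1}{\beta_{3}}T+\Xi+o(1)$ with $\tfrac{1}{\beta_{3}}T$ squeezed between $\digamma(A_{\min})$ and $\digamma(A_{\max})$; this is precisely $\Delta(A_{\max})\le R_{\text{S}}/N_{\text{B}}\le\Delta(A_{\min})$ in the limit, and when $\beta=1$, (\ref{Eq_beta}) forces $A_{\min}=A_{\max}$, so the two bounds coincide and the convergence to $\Delta$ is exact. Besides the monotonicity of $T$ --- the genuine crux --- the remaining work is routine: the root selection in the quadratic and, more importantly, checking that the $o(1)$ remainders are uniform, so that the large-antenna and high-SNR limits may be taken in either order.
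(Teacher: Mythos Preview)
Your argument is correct but takes a genuinely different route from the paper. The paper never invokes Theorem~\ref{Th_R3}; instead it bounds the \emph{finite-dimensional} random variable $R_{\text{S}}$ directly. Using the positive-semidefinite ordering $\theta_{\min}(\mathbf{W}_{1}+\mathbf{W}_{2})\preceq\alpha\mathbf{W}_{1}+\alpha\beta\mathbf{W}_{2}\preceq\theta_{\max}(\mathbf{W}_{1}+\mathbf{W}_{2})$ together with determinant monotonicity (the same inequality underlying Theorem~\ref{Th_4}), the paper sandwiches $R_{\text{S}}$ between two expressions $R_{\text{LB}},R_{\text{UB}}$ each of which is a sum of log-determinants of \emph{standard} i.i.d.\ Wishart-type matrices. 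Every such term then converges almost surely to a $\Phi(\cdot,\cdot)$ via the Mar\v{c}enko--Pastur law, and the high-SNR expansion of $\Phi$ finishes the job. Your approach instead passes to the limit first (getting $\Psi$), then sandwiches the deterministic equivalent $T$ via monotonicity in the two power arguments, and finally evaluates the equal-argument endpoint $T(x,x)$ by solving the collapsed fixed-point equation.

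The paper's route is more elementary in that the ``crux'' you identify---monotonicity of $T$---is replaced by a one-line matrix inequality at finite $N$, and it bypasses the two-parameter fixed point of Theorem~\ref{Th_R3} entirely, working only with the explicit $\Phi$. (Indeed, the cleanest way to prove your monotonicity of $T$ is precisely to observe it is the a.s.\ limit of $\tfrac{1}{N_{\text{E}}}\log|\mathbf{I}+\mathbf{G}_{3}\boldsymbol{\Delta}\mathbf{G}_{3}^{H}|$, which is monotone in the diagonal of $\boldsymbol{\Delta}$ by the same PSD argument---so the two approaches converge at that point.) Your route, on the other hand, reuses Theorem~\ref{Th_R3} economically and makes the structure $\Psi=\text{(Bob term)}-\tfrac{1}{\beta_{3}}T+\text{(AN term)}$ transparent; it also clarifies that the uniformity-of-limits caveat you flag at the end is exactly the same issue the paper faces, since both proofs take the large-antenna limit before the high-SNR limit.
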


\begin{proof}
See Appendix E.
\end{proof}

\subsection{Sufficient and Necessary Conditions for $\bar{R}_{\text{S}},R_{%
\text{S}}>0$}

Theorem \ref{Th_5} shows that the random variable $R_{\text{S}}/N_{\text{B}}$
is almost surely bounded by the constant values $\Delta (A_{\max })$ and $\Delta
(A_{\min })$ given in (\ref{NaR_BOUNDS}).
Then the average normalized secrecy rate is also bounded by the same values, as stated in the
following corollary.

\begin{corollary}
\label{col21}Under the same assumptions of Theorem \ref{Th_5},%
\begin{eqnarray}
 \bar{R}_{\text{S}}/N_{\text{B}} &\geq & \Delta (A_{\max })\text{,}     \notag \\
 \bar{R}_{\text{S}}/N_{\text{B}} &\leq & \Delta (A_{\min })\text{,}
 \label{NaR_BOUNDS}
\end{eqnarray}%
where $\Delta (\cdot )$ is given in (\ref{ddd}). The equality holds if $%
\beta =1$.
\end{corollary}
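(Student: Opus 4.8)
The plan is to deduce Corollary~\ref{col21} from Corollary~\ref{col1} combined with the high-SNR analysis already contained in the proof of Theorem~\ref{Th_5}. The first remark is purely definitional: by (\ref{CLB}), (\ref{CLB_ave}) and the convention $I(X;Y|Z)=\mathbb{E}_{Z}[I(X;Y)|Z]$ fixed just after (\ref{CS_ave}), the average normalized secrecy rate equals the expectation over the channel matrices of the instantaneous one, $\bar{R}_{\text{S}}/N_{\text{B}}=\mathbb{E}_{\mathbf{H},\mathbf{G}}\!\left[R_{\text{S}}/N_{\text{B}}\right]$. Now Corollary~\ref{col1}, already proved, asserts that when the antenna counts grow with the prescribed ratios and $P_{\text{u}},P_{\text{v}}$ are held fixed, $\bar{R}_{\text{S}}/N_{\text{B}}$ tends to the very same constant $\Psi$ as $R_{\text{S}}/N_{\text{B}}$ does almost surely. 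Hence $\bar{R}_{\text{S}}/N_{\text{B}}$ and $R_{\text{S}}/N_{\text{B}}$ possess a common large-antenna limit, and the subsequent high-SNR bounds (\ref{ISR_BOUNDS}) that Theorem~\ref{Th_5} places on that limit as $P_{\text{u}},P_{\text{v}}\to\infty$ transfer verbatim to $\bar{R}_{\text{S}}/N_{\text{B}}$, yielding (\ref{NaR_BOUNDS}); the equality for $\beta=1$ is inherited from the corresponding equality (almost-sure convergence to $\Delta(\cdot)$) in Theorem~\ref{Th_5}.

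For completeness I would also spell out the one analytic ingredient on which Corollary~\ref{col1} rests, namely the interchange of $\mathbb{E}$ with the large-antenna limit. From (\ref{sec_mod2})--(\ref{Eve_mod2}) one gets the exact identity $I(\mathbf{u};\mathbf{z})=\log\det(\mathbf{I}_{N_{\text{B}}}+\sigma_{\text{u}}^{2}\mathbf{H}\mathbf{H}^{H})$ (using $\mathbf{H}\mathbf{Z}=\mathbf{0}$) and the bound $0\le I(\mathbf{u};\mathbf{y})\le\log\det(\mathbf{I}_{N_{\text{E}}}+(\sigma_{\text{u}}^{2}/\sigma_{\text{E}}^{2})\mathbf{G}\mathbf{G}^{H})$, the right-hand side being Eve's rate with the artificial noise switched off, which can only raise her information. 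Since $\sigma_{\text{u}}^{2}=P_{\text{u}}/N_{\text{B}}$ and $\sigma_{\text{u}}^{2}/\sigma_{\text{E}}^{2}=P_{\text{u}}/(\gamma N_{\text{B}})$, the elementary estimate $\log\det(\mathbf{I}+\mathbf{A})\le\operatorname{tr}\mathbf{A}$ for $\mathbf{A}\succeq 0$ gives
\[
\frac{|R_{\text{S}}|}{N_{\text{B}}}\ \le\ \frac{P_{\text{u}}}{N_{\text{B}}^{2}}\operatorname{tr}(\mathbf{H}\mathbf{H}^{H})+\frac{P_{\text{u}}}{\gamma N_{\text{B}}^{2}}\operatorname{tr}(\mathbf{G}\mathbf{G}^{H})\ \triangleq\ W .
\]
Here $\operatorname{tr}(\mathbf{H}\mathbf{H}^{H})$ and $\operatorname{tr}(\mathbf{G}\mathbf{G}^{H})$ are sums of $N_{\text{A}}N_{\text{B}}$, resp.\ $N_{\text{A}}N_{\text{E}}$, i.i.d.\ unit-mean exponential random variables, so with fixed powers and with $N_{\text{A}}/N_{\text{B}}$, $N_{\text{E}}/N_{\text{B}}$ convergent one has $\mathbb{E}[W]=O(1)$ and $\operatorname{Var}(W)\to 0$; thus $\sup\mathbb{E}[W^{2}]<\infty$ and $\{R_{\text{S}}/N_{\text{B}}\}$ is uniformly integrable. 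Fatou's lemma then gives $\liminf\mathbb{E}[R_{\text{S}}/N_{\text{B}}]\ge\mathbb{E}[\liminf R_{\text{S}}/N_{\text{B}}]$ and the reverse Fatou lemma (majorant $W$) gives $\limsup\mathbb{E}[R_{\text{S}}/N_{\text{B}}]\le\mathbb{E}[\limsup R_{\text{S}}/N_{\text{B}}]$; combined with Theorem~\ref{Th_5} these reproduce (\ref{NaR_BOUNDS}), and for $\beta=1$ the almost-sure convergence $R_{\text{S}}/N_{\text{B}}\to\Delta(\cdot)$ is upgraded by uniform integrability to $L^{1}$-convergence, hence to the stated equality.

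I expect the only genuinely delicate point to be the order of limits. Theorem~\ref{Th_5} couples ``$N_{\text{A}},N_{\text{B}},N_{\text{A}}-N_{\text{B}},N_{\text{E}}\to\infty$'' with ``$P_{\text{u}},P_{\text{v}}\to\infty$'', whereas the majorant $W$ above is only uniform when the powers stay fixed --- indeed $\mathbb{E}[W]$ grows linearly in $P_{\text{u}}$. I would therefore run the argument as an iterated limit: first let $N\to\infty$ at fixed $P_{\text{u}},P_{\text{v}}$, which is exactly Corollary~\ref{col1} and is where uniform integrability is invoked; then let $P_{\text{u}},P_{\text{v}}\to\infty$ in the resulting deterministic quantity $\Psi$, which is the purely analytic high-SNR step already performed inside the proof of Theorem~\ref{Th_5}. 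This is also the natural reading of the joint limit, since the random-matrix convergence underpinning Theorem~\ref{Th_R3} only makes sense with the per-antenna powers shrinking, i.e.\ with $P_{\text{u}},P_{\text{v}}$ fixed. Beyond keeping this bookkeeping straight, no further obstacle is anticipated.
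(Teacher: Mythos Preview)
Your proposal is correct and follows the same basic approach the paper intends: the text preceding the corollary says that since $R_{\text{S}}/N_{\text{B}}$ is almost surely bounded by the constants $\Delta(A_{\max})$ and $\Delta(A_{\min})$, the average $\bar{R}_{\text{S}}/N_{\text{B}}=\mathbb{E}[R_{\text{S}}/N_{\text{B}}]$ inherits the same bounds, and the paper's proof is literally the single sentence ``The proof is straightforward.''

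Where you differ is in the level of rigor, not the route. The paper simply asserts that almost-sure bounds pass to the expectation; you actually supply the missing analytic step --- a dominating random variable $W$ built from $\operatorname{tr}(\mathbf{H}\mathbf{H}^{H})$ and $\operatorname{tr}(\mathbf{G}\mathbf{G}^{H})$, uniform integrability of $R_{\text{S}}/N_{\text{B}}$, and a careful discussion of the iterated limit (antennas first, then powers) --- none of which the paper mentions. Your detour through Corollary~\ref{col1} is also not needed for the paper's argument, which goes directly from Theorem~\ref{Th_5} to the expectation, but it is harmless and makes the order-of-limits issue transparent. In short: same idea, but you have filled in the justification the paper omits.
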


\begin{proof}
The proof is straightforward.
\end{proof}

The bounds in (\ref{ISR_BOUNDS}) and (\ref{NaR_BOUNDS}) enable the following
simple sufficient and necessary conditions for positive
instantaneous and average secrecy rate.

\begin{corollary}
\label{col2}Let $N_{\text{A}}$, $N_{\text{B}}$, $N_{\text{A}}-N_{\text{B}}$
and $N_{\text{E}}\rightarrow \infty $ with $N_{\text{A}}/N_{\text{E}%
}\rightarrow \beta _{1}$, $N_{\text{A}}/N_{\text{B}}\rightarrow
\beta _{2}$ and $N_{\text{B}}/N_{\text{E}}\rightarrow \beta _{3}$.
Then
a sufficient condition for positive instantaneous and average secrecy rate is given by%
\begin{equation}
\lim_{P_{\text{u}},P_{\text{v}}\rightarrow \infty }\Delta (A_{\max })>0\text{%
,}
\end{equation}%
and a necessary condition for positive instantaneous and average secrecy rate is
given by%
\begin{equation}
\lim_{P_{\text{u}},P_{\text{v}}\rightarrow \infty }\Delta (A_{\min })>0\text{%
,}
\end{equation}%
where $\Delta (\cdot )$ is given in (\ref{ddd}).
\end{corollary}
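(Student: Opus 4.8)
The plan is to obtain the corollary directly from the two-sided bounds already established: Theorem~\ref{Th_5} bounds the normalized instantaneous rate $R_{\text{S}}/N_{\text{B}}$ and Corollary~\ref{col21} bounds the normalized average rate $\bar{R}_{\text{S}}/N_{\text{B}}$, so only an elementary limit argument remains and the proof will be short. The starting point is that, in the stated asymptotic antenna regime, both results give
\[
\Delta(A_{\max}) \;\le\; R_{\text{S}}/N_{\text{B}} \;\le\; \Delta(A_{\min})\quad\text{(a.s.)},
\]
and the identical two-sided bound for $\bar{R}_{\text{S}}/N_{\text{B}}$, where $\Delta(A_{\max})$ and $\Delta(A_{\min})$ are the explicit functions of $(P_{\text{u}},P_{\text{v}})$ in (\ref{ddd}). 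Since $N_{\text{B}}>0$, the event $\{R_{\text{S}}>0\}$ coincides with $\{R_{\text{S}}/N_{\text{B}}>0\}$, and likewise for $\bar{R}_{\text{S}}$, so it suffices to argue about the normalized quantities.

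For \emph{sufficiency}, suppose $\lim_{P_{\text{u}},P_{\text{v}}\rightarrow\infty}\Delta(A_{\max})>0$. Then $\Delta(A_{\max})>0$ for all sufficiently large $(P_{\text{u}},P_{\text{v}})$, and the lower bounds above force $R_{\text{S}}/N_{\text{B}}\ge\Delta(A_{\max})>0$ almost surely and $\bar{R}_{\text{S}}/N_{\text{B}}\ge\Delta(A_{\max})>0$; hence both secrecy rates are positive. For \emph{necessity}, argue by contraposition: if the (a.s.) limiting value of $R_{\text{S}}/N_{\text{B}}$ --- equivalently, by Corollary~\ref{col21}, of $\bar{R}_{\text{S}}/N_{\text{B}}$ --- is some $c>0$, then the upper bound gives $\Delta(A_{\min})\ge R_{\text{S}}/N_{\text{B}}$ for every large $(P_{\text{u}},P_{\text{v}})$, and letting $P_{\text{u}},P_{\text{v}}\rightarrow\infty$ yields $\lim_{P_{\text{u}},P_{\text{v}}\rightarrow\infty}\Delta(A_{\min})\ge c>0$; this is exactly the stated necessary condition.

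There is no deep obstacle here, only a little bookkeeping about orders of limits. The antenna limit of Theorem~\ref{Th_5}/Corollary~\ref{col21} must be performed first, so the conditions of the corollary are statements about the subsequent $P_{\text{u}},P_{\text{v}}\rightarrow\infty$ limit; to make the passage fully rigorous one should phrase the bounds as $\liminf$ over $(P_{\text{u}},P_{\text{v}})$ of $R_{\text{S}}/N_{\text{B}}$ being at least $\lim\Delta(A_{\max})$, and the corresponding $\limsup$ being at most $\lim\Delta(A_{\min})$, and check that $\lim_{P_{\text{u}},P_{\text{v}}\rightarrow\infty}\Delta(A_{\max})$ and $\lim_{P_{\text{u}},P_{\text{v}}\rightarrow\infty}\Delta(A_{\min})$ exist in $(-\infty,+\infty]$ --- which is immediate from the closed forms (\ref{ddd})--(\ref{FF}), each a finite combination of logarithmic terms in $P_{\text{u}},P_{\text{v}}$. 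It is also worth recording the degenerate case $\beta=1$: there $\theta_{\min}=\theta_{\max}$ and $A_{\min}=A_{\max}$, so the two bounds coincide and the sufficient and necessary conditions merge into a single exact threshold, in agreement with the equality clauses of Theorem~\ref{Th_5} and Corollary~\ref{col21}.
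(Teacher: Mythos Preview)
Your proposal is correct and takes essentially the same approach as the paper, which simply says ``the proof is straightforward'' and relies on the two-sided bounds of Theorem~\ref{Th_5} and Corollary~\ref{col21}. Your write-up is in fact more careful than the paper's, since you explicitly track the order of limits and note the $\beta=1$ collapse of the two conditions.
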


\begin{proof}
The proof is straightforward.
\end{proof}

To use Corollary \ref{col2} for finite (relatively small)
SNR$_{\text{B}}$ and number of antennas, we substitute in $\Delta
(\cdot )$
\begin{equation*}
\beta _{1}=N_{\text{A}}/N_{\text{E}}\text{, }\beta _{2}=N_{\text{A}}/N_{%
\text{B}}\text{, }\beta _{3}=N_{\text{B}}/N_{\text{E}}\text{,}
\end{equation*}%
\begin{equation}
P_{\text{u}}=\alpha \gamma N_{\text{B}}\text{ and
}P_{\text{v}}=\alpha \beta \gamma
(N_{\text{A}}-N_{\text{B}})\text{.}
\end{equation}%
Thus $\Delta (\cdot )$ reduces to a function of $N_{\text{A}}$, $N_{\text{B}}
$, $N_{\text{E}}$, $\alpha $, $\beta $ and $\gamma $.

\begin{remark}
We verified by simulation that Corollary \ref{col2} is accurate for
finite system models, when
\begin{equation*}
\min \left\{ \alpha \gamma ,\alpha \beta \gamma \right\} \geq 4\text{,}
\end{equation*}%
\begin{equation}
\min \left\{ N_{\text{A}},N_{\text{B}},N_{\text{A}}-N_{\text{B}},N_{\text{E}%
}\right\} >2\text{.}
\end{equation}
\end{remark}

\begin{example}
We consider an AN scheme with $N_{\text{A}}=6$, $N_{\text{B}}=3$%
, $\alpha =\gamma =3$ dB and $\beta =1$ dB. Fig.$~$4 shows the value of $%
\bar{R}_{\text{S}}/N_{\text{B}}$, $\Delta (A_{\max })$ and $\Delta (A_{\min
})$ as functions of $N_{\text{E}}$. By direct computation, $\Delta (A_{\max
})>0$ until $N_{\text{E}}>12$ and $\Delta (A_{\min })<0$ when $N_{\text{E}%
}>16$. It was observed experimentally in Fig. 4 that $\bar{R}_{\text{S}}>0$
when $N_{\text{E}}<12$ and $\bar{R}_{\text{S}}=0$ when $N_{\text{E}}>16$.
\end{example}

Compared to the expressions in Theorems \ref{Th_R1} and \ref{Th_R3},
the sufficient and necessary conditions for positive
average/instantaneous secrecy rate in Corollary \ref{col2} are much
easier to compute, and can be used
for system design. For example, from Alice's perspective, given $N_{%
\text{A}}$, $N_{\text{B}}$, $\alpha $, $\beta $ and $\gamma $, she can
easily predict the number of antennas Eve needs to drive the secrecy rate to
zero.

\begin{figure}[tbp]
\centering\includegraphics[scale=0.55]{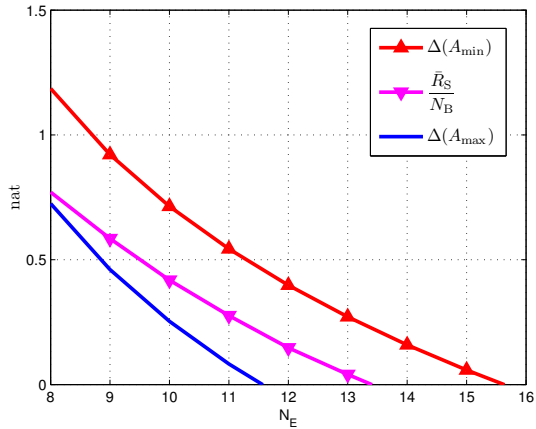}
\caption{Plot of $\bar{R}_{\text{S}}$, $\Delta (A_{\max })$ and $\Delta (A_{\min })$
vs. $N_{\text{E}}$ with $\protect\alpha =\protect\gamma =3$ dB, $\protect%
\beta =1$ dB, $N_{\text{B}}=3$ and $N_{\text{A}}=6$.}
\end{figure}

\section{Conclusions}

In this paper, we have studied the artificial noise scheme in a situation
where Eve has unlimited number of antennas. We derived closed-form
expressions for the average secrecy rate and the asymptotic instantaneous
secrecy rate. The proposed analysis allows a simple evaluation on
the secrecy rate with any SNR$_{\text{B}}$, SNR$_{\text{E}}$, $N_{\text{A}}$%
, $N_{\text{B}}$ and $N_{\text{E}}$, and extends previous studies that were
limited to either the case of $N_{\text{E}}<N_{\text{A}}$ or the case of $N_{%
\text{B}}=1$.

In the second part of this paper, we have derived lower and upper bounds on
the average/instantaneous secrecy rate. Moreover, simple sufficient and
necessary conditions for positive average and instantaneous secrecy rate have
been derived. These easily computable conditions provide Alice and Eve with
design tools for choosing system parameters.

%

\section*{Appendix}

\subsection{Proof of Theorem 1}

Recalling that%
\begin{equation}
\bar{R}_{\text{S}}=I(\mathbf{u;z|H)-}I\mathbf{(\mathbf{u;y|H},\mathbf{G)}}%
\text{.}  \label{CLB_ave_r}
\end{equation}%
In what follows, we evaluate $I(\mathbf{u;z|H)}$ and $I\mathbf{(\mathbf{u;y|H%
},\mathbf{G)}}$, respectively. We treat $\mathbf{H}$ and $\mathbf{G}$ as
Gaussian random matrices.

\emph{1) }$I(\mathbf{u;z\mathbf{|H})}$

Since $\left( \mathbf{HV}_{1}\right) \left( \mathbf{HV}_{1}\right) ^{H}=%
\mathbf{HH}^{H}$, using \cite[Th. 2]{Telatar99} and \cite[Th. 1]{shin03}, we
have%
\begin{eqnarray}
&&I(\mathbf{u;z\mathbf{|H})}  \notag \\
&=&\text{E}_{\mathbf{H}}\left( \log \left\vert \mathbf{I}_{N_{\text{B}%
}}+\alpha \gamma \mathbf{HH}^{H}\right\vert \right)  \notag \\
&=&\Theta (N_{\text{B}},N_{\text{A}},\alpha \gamma )\text{,}  \label{term0}
\end{eqnarray}%
where $\Theta (x,y,z)$ is given in (\ref{Cap_CF}).

\emph{2)} $I\mathbf{(\mathbf{u;y|H},\mathbf{G)}}$

Since all entries in $\mathbf{H}$ and $\mathbf{G}$ are mutually independent,
$I(\mathbf{u;y})$ can be expressed as a function of these independent random
entries. This allows us to take two steps to compute the expected value of $%
I(\mathbf{u;y})$: we first compute $I(\mathbf{u;y\mathbf{|G})}$ given $%
\mathbf{H}$, then compute E$_{\mathbf{H}}\left[ I(\mathbf{u;y|\mathbf{G})}|%
\mathbf{H}\right] $. The advantage is that for given $\mathbf{H}$, $\mathbf{V%
}=[\mathbf{V}_{1}$, $\mathbf{Z}]$ is a fixed unitary matrix. Then, using
\cite[Th. 1]{Lukacs54}, $\mathbf{G\mathbf{V}}_{1}$ and $\mathbf{GZ}$ are
mutually independent complex Gaussian random matrices with i.i.d. entries $%
\sim \mathcal{N}_{\mathbb{C}}(0$, $1)$.

Let $\mathbf{G}_{1}=\mathbf{G\mathbf{V}}_{1}$, $\mathbf{G}_{2}=\mathbf{GZ}$,
$\mathbf{W}_{\text{1}}\mathbf{=\mathbf{G}}_{\text{1}}\mathbf{\mathbf{G}}_{%
\text{1}}^{H}$ and $\mathbf{W}_{\text{2}}\mathbf{=\mathbf{G}}_{\text{2}}%
\mathbf{\mathbf{\mathbf{G}}}_{\text{2}}^{H}$.\ According to \cite{Telatar99}%
, for given $\mathbf{H}$, we have%
\begin{align}
& I(\mathbf{u;y\mathbf{|G})}  \notag \\
& =\text{E}_{\mathbf{G}_{\text{1}}\text{,}\mathbf{G}_{\text{2}}}\left( \log
\frac{\left\vert \mathbf{I}_{N_{\text{E}}}\sigma _{\text{E}}^{2}+\sigma _{%
\text{u}}^{2}\mathbf{W}_{\text{1}}\mathbf{+}\sigma _{\text{v}}^{2}\mathbf{W}%
_{\text{2}}\right\vert }{\left\vert \mathbf{I}_{N_{\text{E}}}\sigma _{\text{E%
}}^{2}\mathbf{+}\sigma _{\text{v}}^{2}\mathbf{W}_{\text{2}}\right\vert }%
\right)  \notag \\
& =\text{E}_{\mathbf{G}_{\text{1}}\text{,}\mathbf{G}_{\text{2}}}\left( \log
\frac{\left\vert \mathbf{I}_{N_{\text{E}}}+\frac{\sigma _{\text{u}}^{2}}{%
\sigma _{\text{E}}^{2}}\mathbf{W}_{\text{1}}\mathbf{+}\frac{\sigma _{\text{v}%
}^{2}}{\sigma _{\text{E}}^{2}}\mathbf{W}_{\text{2}}\right\vert }{\left\vert
\mathbf{I}_{N_{\text{E}}}\mathbf{+}\frac{\sigma _{\text{v}}^{2}}{\sigma _{%
\text{E}}^{2}}\mathbf{W}_{\text{2}}\right\vert }\right)  \notag \\
& =\text{E}_{\mathbf{G}_{\text{1}}\text{,}\mathbf{G}_{\text{2}}}\left( \log
\left\vert \mathbf{I}_{N_{\text{E}}}+\alpha \mathbf{W}_{\text{1}}\mathbf{+}%
\alpha \beta \mathbf{W}_{\text{2}}\right\vert \right) -\text{E}_{\mathbf{G}_{%
\text{2}}}\left( \log \left\vert \mathbf{I}_{N_{\text{E}}}\mathbf{+}\alpha
\beta \mathbf{W}_{\text{2}}\right\vert \right) \text{.}  \label{Ave_I_u_y}
\end{align}

According to \cite[Th. 2]{Telatar99} and \cite[Th. 1]{shin03}, the second
term of (\ref{Ave_I_u_y}) equals to%
\begin{equation}
\text{E}_{\mathbf{G}_{\text{2}}}\left( \log \left\vert \mathbf{I}_{N_{\text{E%
}}}\mathbf{+}\alpha \beta \mathbf{W}_{\text{2}}\right\vert \right) =\Theta
(N_{\min },N_{\max },\alpha \beta )\text{,}  \label{Term2}
\end{equation}%
where $\Theta (x,y,z)$, $N_{\min }$ and $N_{\max }$ are given in (\ref%
{Cap_CF}), (\ref{n_min}) and (\ref{n_max}), respectively.

To compute the first term of (\ref{Ave_I_u_y}), we rewrite $\alpha \mathbf{W}%
_{\text{1}}\mathbf{+}\alpha \beta \mathbf{W}_{\text{2}}$ as $\mathbf{G}_{3}%
\mathbf{\Delta G}_{3}^{H}$, where%
\begin{equation}
\mathbf{G}_{3}=\left[ \mathbf{G}_{1},\mathbf{G}_{2}\right] \text{,}
\label{G3}
\end{equation}%
\begin{equation}
\mathbf{\Delta =}\mathrm{diag}\left( \left\{ \theta _{i}\right\} _{1}^{N_{%
\text{A}}}\right) \text{.}  \label{Cov}
\end{equation}%
$\theta _{i}$ is defined in (\ref{theta}). We define%
\begin{equation}
\Omega =\text{E}_{\mathbf{G}_{\text{1}}\text{,}\mathbf{G}_{\text{2}}}\left(
\log \left\vert \mathbf{I}_{N_{\text{E}}}+\alpha \mathbf{W}_{\text{1}}%
\mathbf{+}\alpha \beta \mathbf{W}_{\text{2}}\right\vert \right) \text{.}
\end{equation}

\emph{Case 1: }If $\beta =1$, using \cite[Th. 2]{Telatar99} and \cite[Th. 1]%
{shin03}, we have%
\begin{equation}
\Omega =\text{E}_{\mathbf{G}_{3}}\left( \log \left\vert \mathbf{I}_{N_{\text{%
E}}}+\alpha \mathbf{G}_{3}\mathbf{G}_{3}^{H}\right\vert \right) =\Theta (%
\hat{N}_{\min },\hat{N}_{\max },\alpha )\text{,}  \label{Term1_beta_1}
\end{equation}%
where $\mathbf{G}_{3}$, $\hat{N}_{\min }$ and $\hat{N}_{\max }$ are given in
(\ref{G3}), (\ref{n_hat_min}) and (\ref{n_hat_max}), respectively.

\emph{Case 2: }If $\beta \neq 1$, $\mathbf{\Delta }^{-1}$ contains two
groups of coinciding eigenvalues. According to \cite[Th. 1]{Chiani10}, we
have%
\begin{equation}
\Omega =\text{E}_{\mathbf{G}_{3}}\left( \log \left\vert \mathbf{I}_{N_{\text{%
E}}}+\mathbf{G}_{3}\mathbf{\Delta G}_{3}^{H}\right\vert \right)
=K\sum\limits_{k=1}^{\hat{N}_{\min }}\det \left( \mathbf{R}^{(k)}\right)
\text{,}  \label{Term1_beta_no1}
\end{equation}%
where $K$ and $\mathbf{R}^{(k)}$ are given in (\ref{K}) and (\ref{R}),
respectively.

Based on (\ref{Ave_I_u_y}), (\ref{Term2}), (\ref{Term1_beta_1}) and (\ref%
{Term1_beta_no1}), we have%
\begin{eqnarray}
&&I(\mathbf{u;y|H,G})  \notag \\
&=&\text{E}_{\mathbf{H}}\left( I(\mathbf{u;y|G})|\mathbf{H}\right)  \notag \\
&=&\Omega -\Theta (N_{\min },N_{\max },\alpha \beta )\text{.}  \label{AAA}
\end{eqnarray}

By substituting (\ref{AAA}) and (\ref{term0}) into (\ref{CLB_ave_r}), we have%
\begin{equation}
\bar{R}_{\text{S}}=\Theta (N_{\text{B}},N_{\text{A}},\alpha \gamma )+\Theta
(N_{\min },N_{\max },\alpha \beta )-\Omega \text{.}
\end{equation}

\QEDA

\subsection{Proof of Theorem 2}

Recalling that%
\begin{equation}
R_{\text{S}}=I(\mathbf{u;z)}-I(\mathbf{u;y)}.
\end{equation}%
In what follows, we evaluate $I(\mathbf{u;z)}$ and $I(\mathbf{u;y)}$,
respectively.

\emph{1) }$I(\mathbf{u;z)}$

Similarly to (\ref{term0}), we have
\begin{equation}
I(\mathbf{u;z)=}\log \left\vert \mathbf{I}_{N_{\text{B}}}+\alpha \gamma
\mathbf{HH}^{H}\right\vert \text{.}  \label{64}
\end{equation}

Since Alice knows the realization of $\mathbf{H}$, she is able to compute
the deterministic value of $I(\mathbf{u;z)}$. As $N_{\text{A}}$ and $N_{\text{B}%
}\rightarrow \infty $ with $N_{\text{A}}/N_{\text{B}}\rightarrow \beta _{2}$%
, the following limit holds \cite[Eq. 1.14]{Verdu04}%
\begin{equation}
\frac{I(\mathbf{u;z)}}{N_{\text{B}}}\rightarrow \Phi \left( P_{\text{u}}%
\text{, }\beta _{2}\right) \text{,}  \label{S1}
\end{equation}%
where $\Phi \left( x\text{, }y\right) $ is given in (\ref{phi}).

\emph{2) }$I(\mathbf{u;y)}$

Similarly to (\ref{Ave_I_u_y}), we have%
\begin{equation}
I(\mathbf{u;y)}=\log \left\vert \mathbf{I}_{N_{\text{E}}}+\alpha \mathbf{W}_{%
\text{1}}\mathbf{+}\alpha \beta \mathbf{W}_{\text{2}}\right\vert -\log
\left\vert \mathbf{I}_{N_{\text{E}}}\mathbf{+}\alpha \beta \mathbf{W}_{\text{%
2}}\right\vert \text{,}  \label{I_simple1}
\end{equation}%
where $\mathbf{G}_{1}=\mathbf{G\mathbf{V}}_{1}$, $\mathbf{G}_{2}=\mathbf{GZ}$%
, $\mathbf{W}_{\text{1}}=\mathbf{\mathbf{G}}_{\text{1}}\mathbf{\mathbf{G}}_{%
\text{1}}^{H}$ and $\mathbf{W}_{\text{2}}=\mathbf{\mathbf{G}}_{\text{2}}%
\mathbf{\mathbf{\mathbf{G}}}_{\text{2}}^{H}$.

From Alice side, $\mathbf{V}=[\mathbf{V}_{1}$, $\mathbf{Z}]$ is a fixed
unitary matrix and $\mathbf{G}$ is a Gaussian random matrix. Using \cite[Th.
1]{Lukacs54}, $\mathbf{G}_{1}$ and $\mathbf{G}_{2}$ are mutually independent
complex Gaussian random matrices with i.i.d. entries $\sim \mathcal{N}_{%
\mathbb{C}}(0$, $1)$.

According to \cite[Eq. 1.14]{Verdu04}, as $N_{\text{A}}-N_{\text{B}}$ and $%
N_{\text{E}}\rightarrow \infty $ with $N_{\text{A}}/N_{\text{E}}\rightarrow
\beta _{1}$ and $N_{\text{B}}/N_{\text{E}}\rightarrow \beta _{3}$, i.e., $%
(N_{\text{A}}-N_{\text{B}})/N_{\text{E}}\rightarrow \beta _{1}-\beta _{3}$,%
\begin{equation}
\frac{\log \left\vert \mathbf{I}_{N_{\text{E}}}\mathbf{+}\alpha \beta
\mathbf{W}_{\text{2}}\right\vert }{N_{\text{E}}}\overset{a.s.}{\rightarrow }%
\Phi \left( \frac{P_{\text{v}}}{\gamma (\beta _{1}-\beta _{3})}\text{, }%
\beta _{1}-\beta _{3}\right) .  \label{S2}
\end{equation}

Let $\mathbf{W}_{3}=\alpha \mathbf{W}_{\text{1}}\mathbf{+}\alpha \beta
\mathbf{W}_{\text{2}}$. We can rewrite $\mathbf{W}_{3}$ as $\mathbf{\hat{G}}%
_{3}\mathbf{T\hat{G}}_{3}^{H}$, where%
\begin{equation}
\mathbf{\hat{G}}_{3}=\left[ \frac{1}{\sqrt{N_{\text{E}}}}\mathbf{G}_{1},%
\frac{1}{\sqrt{N_{\text{E}}}}\mathbf{G}_{2}\right] \text{,}  \label{G3_hat}
\end{equation}%
\begin{equation}
\mathbf{T}=N_{\text{E}}\mathrm{diag}\left( \left\{ \theta
_{i}\right\} _{1}^{N_{\text{A}}}\right) \text{,}  \label{T}
\end{equation}%
where the $\theta_i$'s are given in (\ref{theta}).
Note that the entries of $\mathbf{\hat{G}}_{3}$ are i.i.d. complex random
variables $\sim \mathcal{N}_{\mathbb{C}}(0$, $1/N_{\text{E}})$.

According to \cite[Th. 2.39]{Verdu04}, as $N_{\text{A}}$ and $N_{\text{E}%
}\rightarrow \infty $ with $N_{\text{A}}/N_{\text{E}}\rightarrow \beta _{1}$,%
\begin{equation}
\frac{\log \left\vert \mathbf{I}_{N_{\text{E}}}+\mathbf{W}_{3}\right\vert }{%
N_{\text{E}}}\overset{a.s.}{\rightarrow }\beta _{1}\mathcal{V}(\delta )-\log
\delta +\delta -1\text{,}  \label{S3}
\end{equation}%
where $\mathcal{V}(\delta )$ is given in (\ref{V_delta})
and $\delta $ satisfies%
\begin{equation}
\beta _{1}=\frac{1-\delta }{1-\eta (\delta )}\text{,}
\end{equation}%
with $\eta (\delta )$ given in (\ref{eta_delta}).

From (\ref{S1}), (\ref{S2}) and (\ref{S3}), as $N_{\text{A}}$, $N_{\text{B}}$%
, $N_{\text{A}}-N_{\text{B}}$ and $N_{\text{E}}\rightarrow \infty $ with $N_{%
\text{A}}/N_{\text{E}}\rightarrow \beta _{1}$, $N_{\text{A}}/N_{\text{B}%
}\rightarrow \beta _{2}$ and $N_{\text{B}}/N_{\text{E}}\rightarrow \beta
_{3} $,%
\begin{eqnarray}
&&\frac{R_{\text{S}}}{N_{\text{B}}}\overset{a.s.}{\rightarrow
}\Psi\text{,}
\end{eqnarray}
where the constant $\Psi$ is given in (\ref{R3}).
\QEDA

\subsection{Proof of Theorem 3}

We first show that using Gaussian input alphabets, if $N_{\text{E}}\leq N_{%
\text{A}}-N_{\text{B}}$, $I(\mathbf{u;y\mathbf{|H},\mathbf{G})}\rightarrow 0$
as $\alpha $, $\beta \rightarrow \infty $. We follow the definitions in the
proof of Theorem \ref{Th_R1}. Based on (\ref{Ave_I_u_y}), for a given
realization of $\mathbf{\mathbf{H}}$, we have%
\begin{align}
& I(\mathbf{u;y\mathbf{|G})}  \notag \\
& =\text{E}_{\mathbf{G}_{\text{1}}\text{,}\mathbf{G}_{\text{2}}}\left( \log
\frac{\left\vert \mathbf{I}_{N_{\text{E}}}\sigma _{\text{E}}^{2}+\sigma _{%
\text{u}}^{2}\mathbf{W}_{\text{1}}\mathbf{+}\sigma _{\text{v}}^{2}\mathbf{W}%
_{\text{2}}\right\vert }{\left\vert \mathbf{I}_{N_{\text{E}}}\sigma _{\text{E%
}}^{2}\mathbf{+}\sigma _{\text{v}}^{2}\mathbf{W}_{\text{2}}\right\vert }%
\right)  \notag \\
& \overset{a}{\leq }\text{E}_{\mathbf{G}_{\text{2}}}\left( \log \frac{%
\left\vert \mathbf{I}_{N_{\text{E}}}\sigma _{\text{E}}^{2}+\sigma _{\text{u}%
}^{2}\text{E}_{\mathbf{G}_{\text{1}}}\left( \mathbf{W}_{\text{1}}\right)
\mathbf{+}\sigma _{\text{v}}^{2}\mathbf{W}_{\text{2}}\right\vert }{%
\left\vert \mathbf{I}_{N_{\text{E}}}\sigma _{\text{E}}^{2}\mathbf{+}\sigma _{%
\text{v}}^{2}\mathbf{W}_{\text{2}}\right\vert }\right)  \notag \\
& =\text{E}_{\mathbf{G}_{\text{2}}}\left( \log \frac{\left\vert \mathbf{I}%
_{N_{\text{E}}}+\frac{\sigma _{\text{v}}^{2}}{\sigma _{\text{E}}^{2}+N_{%
\text{B}}\sigma _{\text{u}}^{2}}\mathbf{W}_{\text{2}}\right\vert }{%
\left\vert \mathbf{I}_{N_{\text{E}}}+\frac{\sigma _{\text{v}}^{2}}{\sigma _{%
\text{E}}^{2}}\mathbf{W}_{\text{2}}\right\vert }\right) +N_{\text{E}}\log
\frac{\sigma _{\text{E}}^{2}+N_{\text{B}}\sigma _{\text{u}}^{2}}{\sigma _{%
\text{E}}^{2}}\text{,}  \label{I_boundI}
\end{align}%
where $(a)$ holds due to the concavity of log-determinant function and Jensen's
inequality.

Let%
\begin{equation*}
\mathbf{W} = \QATOPD\{ . {\mathbf{G}_{\text{2}}\mathbf{G}_{\text{2}}^{H}}{%
\mathbf{G}_{\text{2}}^{H}\mathbf{G}_{\text{2}}}%
\begin{array}{c}
\text{if }N_{\text{E}}\leq N_{\text{A}}-N_{\text{B}} \\
\text{if }N_{\text{E}}>N_{\text{A}}-N_{\text{B}}%
\end{array}%
\text{,}
\end{equation*}%
i.e., $\mathbf{W}\sim W_{N_{\min }}(N_{\max }$, $\mathbf{I}_{N_{\min }}%
\mathbf{)}$.

Recalling the definitions of $\alpha $ and $\beta $ in Sec. II.C, and based on
Sylvester's determinant theorem and \cite[Th. 1]{shin03}, the first term of (%
\ref{I_boundI}) can be rewritten as%
\begin{align}
& \text{E}_{\mathbf{G}_{\text{2}}}\left( \log \frac{\left\vert \mathbf{I}%
_{N_{\min }}+\frac{\alpha \beta }{1+\alpha N_{\text{B}}}\mathbf{W}%
\right\vert }{\left\vert \mathbf{I}_{N_{\min }}+\alpha \beta \mathbf{W}%
\right\vert }\right)  \notag \\
& =\Theta (N_{\min },N_{\max },\alpha \beta /(1+\alpha N_{\text{B}}))-\Theta
(N_{\min },N_{\max },\alpha \beta )  \label{GiveH}
\end{align}%
where $\Theta (x,y,z)$ is given in (\ref{Cap_CF}).

From (\ref{I_boundI}) and (\ref{GiveH}), we have%
\begin{eqnarray}
I(\mathbf{u;y|H},\mathbf{G)}&=&
\text{E}_{\mathbf{H}}\left[ I(\mathbf{u;y|\mathbf{G})}|\mathbf{H}\right]
\notag \\
&&\hspace{-20mm}\leq N_{\text{E}}\log (1+\alpha N_{\text{B}})-\Theta (N_{\min },N_{\max
},\alpha \beta )  \notag \\
&&\hspace{-10mm}+\Theta (N_{\min },N_{\max },\alpha \beta /(1+\alpha N_{\text{B}}))  \notag
\\
&&\hspace{-20mm}=(N_{\text{E}}-N_{\min })\log \alpha N_{\text{B}}+O\left( \frac{1}{\alpha }%
\right) +O\left( \frac{1}{\beta }\right) \text{.}  \label{Up_b}
\end{eqnarray}

Based on (\ref{Up_b}), if $N_{\min }=N_{\text{E}}$, i.e., $N_{\text{E}}\leq
N_{\text{A}}-N_{\text{B}}$, as $\alpha $ and $\beta \rightarrow \infty $,%
\begin{equation}
I(\mathbf{u;y|H},\mathbf{G)}=0\text{.}  \label{ub2}
\end{equation}

Under the same conditions, by substituting (\ref{ub2}) into (\ref{CLB_ave}),
we have%
\begin{equation}
\bar{R}_{\text{S}}=I(\mathbf{u;z\mathbf{|H})}=\bar{C}_{\text{Bob}}\text{,}
\label{cc1}
\end{equation}%
where $\bar{C}_{\text{Bob}}$ represents Bob's average channel
capacity. The last equation holds since the input $\mathbf{u}$ is a
circularly symmetric complex Gaussian random vector with zero mean
and covariance $\sigma _{\text{u}}^{2}\mathbf{I}_{N_{\text{B}}}$
\cite[Th. 1]{Telatar99}.

On the other hand, from (\ref{CS_ave}), we have%
\begin{equation}
\bar{C}_{\text{S}}\leq \max_{p\left( \mathbf{u}\right) }\left\{ I(\mathbf{u;z%
\mathbf{|H})}\right\} =\bar{C}_{\text{Bob}}\text{.}  \label{cc2}
\end{equation}

Based on (\ref{cc1}) and (\ref{cc2}), as $\alpha $, $\beta \rightarrow
\infty $, if $N_{\text{E}}\leq N_{\text{A}}-N_{\text{B}}$,%
\begin{equation}
\bar{C}_{\text{S}}=\bar{R}_{\text{S}}=\bar{C}_{\text{Bob}}\text{.}
\end{equation}

\QEDA \vspace{-5 mm}
\subsection{Proof of Theorem 4}

We follow the definitions in the proof of Theorem \ref{Th_R1}. We first
bound the term $\Omega $ in the expression of $\bar{R}_{\text{S}}$ given in (%
\ref{R3}). Recalling that%
\begin{equation}
\Omega =\text{E}_{\mathbf{G}_{\text{1}}\text{,}\mathbf{G}_{\text{2}}}\left(
\log \left\vert \mathbf{I}_{N_{\text{E}}}+\alpha \mathbf{W}_{\text{1}}%
\mathbf{+}\alpha \beta \mathbf{W}_{\text{2}}\right\vert \right) \text{.}
\label{om1}
\end{equation}

Let $\theta _{\min }=\min \{\alpha ,\alpha \beta \}$ and $\theta _{\max
}=\max \{\alpha ,\alpha \beta \}$. Since $\mathbf{W}_{\text{1}}$ and $%
\mathbf{W}_{\text{2}}$ are positive semidefinite matrices, using \cite[Eq.
12, pp. 55]{Helmut96}, we have%
\begin{eqnarray}
&&\left\vert \mathbf{I}_{N_{\text{E}}}+\alpha \mathbf{W}_{\text{1}}\mathbf{+}%
\alpha \beta \mathbf{W}_{\text{2}}\right\vert  \notag \\
&\geq &\left\vert \mathbf{I}_{N_{\text{E}}}+\theta _{\min }(\mathbf{W}_{%
\text{1}}\mathbf{+W}_{\text{2}})\right\vert  \notag \\
&=&\left\vert \mathbf{I}_{N_{\text{E}}}+\theta _{\min }\mathbf{G}_{3}\mathbf{%
G}_{3}^{H}\right\vert \text{,}  \label{OM_L}
\end{eqnarray}%
and%
\begin{eqnarray}
&&\left\vert \mathbf{I}_{N_{\text{E}}}+\alpha \mathbf{W}_{\text{1}}\mathbf{+}%
\alpha \beta \mathbf{W}_{\text{2}}\right\vert  \notag \\
&\leq &\left\vert \mathbf{I}_{N_{\text{E}}}+\theta _{\max }(\mathbf{W}_{%
\text{1}}\mathbf{+W}_{\text{2}})\right\vert  \notag \\
&=&\left\vert \mathbf{I}_{N_{\text{E}}}+\theta _{\max }\mathbf{G}_{3}\mathbf{%
G}_{3}^{H}\right\vert \text{,}  \label{OM_U}
\end{eqnarray}%
where $\mathbf{G}_{3}$, $\hat{N}_{\min }$ and $\hat{N}_{\max }$ are given in
(\ref{G3}), (\ref{n_hat_min}) and (\ref{n_hat_max}), respectively. The
equality holds if $\beta =1$.

Based on (\ref{om1}), (\ref{OM_L}) and (\ref{OM_U}), using \cite[Th. 2]%
{Telatar99} and \cite[Th. 1]{shin03}, we have%
\begin{equation}
\Theta (\hat{N}_{\min },\hat{N}_{\max },\theta _{\min })\leq \Omega \leq
\Theta (\hat{N}_{\min },\hat{N}_{\max },\theta _{\max })\text{,}
\label{Omeg_b}
\end{equation}%
where $\Theta (x,y,z)$ is given in (\ref{Cap_CF}).

By substituting (\ref{Omeg_b}) into (\ref{RS_exact}), we have%
\begin{equation*}
\bar{R}_{\text{LB}}\leq \bar{R}_{\text{S}}\leq \bar{R}_{\text{UB}}\text{,}
\end{equation*}%
where%
\begin{eqnarray}
\bar{R}_{\text{LB}} &=&\Theta (N_{\text{B}},N_{\text{A}},\alpha \gamma
)+\Theta (N_{\min },N_{\max },\alpha \beta )  \notag \\
&&-\Theta (\hat{N}_{\min },\hat{N}_{\max },\theta _{\max })\text{,}
\end{eqnarray}%
\begin{eqnarray}
\bar{R}_{\text{UB}} &=&\Theta (N_{\text{B}},N_{\text{A}},\alpha \gamma
)+\Theta (N_{\min },N_{\max },\alpha \beta )  \notag \\
&&-\Theta (\hat{N}_{\min },\hat{N}_{\max },\theta _{\min })\text{.}
\end{eqnarray}

\QEDA

\subsection{Proof of Theorem 5}

We follow the definitions in the proofs of Theorems \ref{Th_R1} and \ref%
{Th_4}. Based on (\ref{64}), (\ref{I_simple1}), (\ref{OM_L}) and (\ref{OM_U}%
), we have%
\begin{eqnarray}
&&R_{\text{S}}\geq \log \left\vert \mathbf{I}_{N_{\text{B}}}+\alpha \gamma
\mathbf{HH}^{H}\right\vert +\log \left\vert \mathbf{I}_{N_{\text{E}}}\mathbf{%
+}\alpha \beta \mathbf{W}_{\text{2}}\right\vert  \notag \\
&&-\left\vert \mathbf{I}_{N_{\text{E}}}+\theta _{\max }\mathbf{G}_{3}\mathbf{%
G}_{3}^{H}\right\vert \triangleq R_{\text{LB}}\text{,}  \label{in_low}
\end{eqnarray}%
\begin{eqnarray}
&&R_{\text{S}}\leq \log \left\vert \mathbf{I}_{N_{\text{B}}}+\alpha \gamma
\mathbf{HH}^{H}\right\vert +\log \left\vert \mathbf{I}_{N_{\text{E}}}\mathbf{%
+}\alpha \beta \mathbf{W}_{\text{2}}\right\vert  \notag \\
&&-\left\vert \mathbf{I}_{N_{\text{E}}}+\theta _{\min }\mathbf{G}_{3}\mathbf{%
G}_{3}^{H}\right\vert \triangleq R_{\text{UB}}\text{.}  \label{in_up}
\end{eqnarray}%
The equality of the random variables $R_{\text{S}} = R_{\text{LB}}=R_{\text{UB}}$
holds if $\beta =1$, since $\theta _{\min }=\theta _{\max }$.

We then evaluate $R_{\text{LB}}$ and $R_{\text{UB}}$. For convenience, we
define%
\begin{eqnarray*}
\rho &\triangleq&\beta _{1}-\beta _{3}\text{,} \\
A_{\min } &\triangleq&\min \left\{ \frac{P_{\text{v}}}{\gamma \rho},\frac{P_{\text{u}}%
}{\gamma \beta _{3}}\right\} \text{,} \\
A_{\max } &\triangleq&\max \left\{ \frac{P_{\text{v}}}{\gamma \rho},\frac{P_{\text{u}}%
}{\gamma \beta _{3}}\right\} \text{.}
\end{eqnarray*}

Similarly to the proof in Theorem \ref{Th_R3}, using \cite[Eq. 1.14]{Verdu04}%
, as $N_{\text{A}}$, $N_{\text{B}}$, $N_{\text{A}}-N_{\text{B}}$ and $N_{%
\text{E}}\rightarrow \infty $ with $N_{\text{A}}/N_{\text{E}}\rightarrow
\beta _{1}$, $N_{\text{A}}/N_{\text{B}}\rightarrow \beta _{2}$ and $N_{\text{%
B}}/N_{\text{E}}\rightarrow \beta _{3}$,%
\begin{equation}
\frac{\log \left\vert \mathbf{I}_{N_{\text{B}}}+\alpha \gamma \mathbf{HH}%
^{H}\right\vert }{N_{\text{B}}}\overset{a.s.}{\rightarrow }\Phi \left( P_{%
\text{u}}\text{, }\beta _{2}\right) \text{,}  \label{a1}
\end{equation}%
\begin{equation}
\frac{\log \left\vert \mathbf{I}_{N_{\text{E}}}\mathbf{+}\alpha \beta
\mathbf{W}_{\text{2}}\right\vert }{N_{\text{E}}}\overset{a.s.}{\rightarrow }%
\Phi \left( \frac{P_{\text{v}}}{\gamma \rho}\text{, }\rho\right) \text{,}
\label{a2}
\end{equation}%
\begin{equation}
\frac{\left\vert \mathbf{I}_{N_{\text{E}}}+\theta _{\max }\mathbf{G}_{3}%
\mathbf{G}_{3}^{H}\right\vert }{N_{\text{E}}}\overset{a.s.}{\rightarrow }%
\Phi \left( A_{\max }\text{, }\beta _{1}\right) \text{,}  \label{a3}
\end{equation}%
\begin{equation}
\frac{\left\vert \mathbf{I}_{N_{\text{E}}}+\theta _{\min }\mathbf{G}_{3}%
\mathbf{G}_{3}^{H}\right\vert }{N_{\text{E}}}\overset{a.s.}{\rightarrow }%
\Phi \left( A_{\min }\text{, }\beta _{1}\right) \text{,}  \label{a4}
\end{equation}%
where $\Phi \left( x,y\right) $ is given in (\ref{phi}).

We then evaluate the function $\Phi \left( x,y\right) $. From (\ref{phi}),
since $\mathcal{F}\left( x,y\right) =\mathcal{F}\left( xy,y^{-1}\right) $,
it is easy to show the following property%
\begin{equation}
\Phi \left( x,y\right) =\frac{\Phi \left( xy,y^{-1}\right) }{y^{-1}}\text{.}
\label{phi1}
\end{equation}

Recalling \cite[Examples 2.14\&2.15]{Verdu04}, if $y\leq 1$,%
\begin{equation}
\lim_{x\rightarrow \infty }\left( \log x-\frac{\Phi (x\text{, }y)}{y}\right)
=\frac{1-y}{y}\log \left( 1-y\right) +1\text{.}  \label{ex1}
\end{equation}

Using (\ref{phi1}) and (\ref{ex1}), we have the following results on the non random quantities
$\Phi(P_{\text{u}},\beta _{2})$, $\Phi (P_{\text{v}}/(\gamma \rho),\rho)$, $\Phi
\left( A_{\max },\beta _{1}\right) $ and $\Phi \left( A_{\min },\beta
_{1}\right) $.

Since $\beta _{2}>1$, i.e., $N_{\text{A}}>N_{\text{B}}$,%
\begin{eqnarray}
&&\lim_{P_{\text{u}}\rightarrow \infty }\left( \log P_{\text{u}}\beta
_{2}-\Phi (P_{\text{u}}\text{, }\beta _{2})\right)  \notag \\
&=&\lim_{P_{\text{u}}\rightarrow \infty }\left( \log P_{\text{u}}\beta _{2}-%
\frac{\Phi (P_{\text{u}}\beta _{2}\text{, }\beta _{2}^{-1})}{\beta _{2}^{-1}}%
\right)  \notag \\
&=&\left( \beta _{2}-1\right) \log \left( 1-\beta _{2}^{-1}\right) +1\text{.}
\end{eqnarray}

If $\rho\leq 1$,%
\begin{eqnarray}
&&\lim_{P_{\text{v}}\rightarrow \infty }\left( \log \frac{P_{\text{v}}}{%
\gamma \rho}-\frac{\Phi \left( P_{\text{v}}/(\gamma \rho)\text{, }%
\rho\right) }{\rho}\right)  \notag \\
&=&\frac{1-\rho}{\rho}\log \left( 1-\rho\right) +1\text{.}
\end{eqnarray}

If $\rho>1$,%
\begin{eqnarray}
&&\lim_{P_{\text{v}}\rightarrow \infty }\left( \log \frac{P_{\text{v}}}{%
\gamma }-\Phi \left( \frac{P_{\text{v}}}{\gamma \rho}\text{, }\rho\right)
\right)  \notag \\
&=&\lim_{P_{\text{v}}\rightarrow \infty }\left( \log \frac{P_{\text{v}}}{%
\gamma }-\frac{\Phi \left( P_{\text{v}}/\gamma \text{, }\rho^{-1}\right) }{%
\rho^{-1}}\right)  \notag \\
&=&(\rho-1)\log \left( 1-\rho^{-1}\right) +1\text{.}
\end{eqnarray}

If $\beta _{1}\leq 1$,%
\begin{equation}
\lim_{P_{\text{u}},P_{\text{v}}\rightarrow \infty }\left( \log A_{\max }-%
\frac{\Phi \left( A_{\max }\text{, }\beta _{1}\right) }{\beta _{1}}\right) =%
\frac{1-\beta _{1}}{\beta _{1}}\log \left( 1-\beta _{1}\right) +1\text{,}
\end{equation}%
\begin{equation}
\lim_{P_{\text{u}},P_{\text{v}}\rightarrow \infty }\left( \log A_{\min }-%
\frac{\Phi \left( A_{\min }\text{, }\beta _{1}\right) }{\beta _{1}}\right) =%
\frac{1-\beta _{1}}{\beta _{1}}\log \left( 1-\beta _{1}\right) +1\text{.}
\end{equation}

If $\beta _{1}>1$,%
\begin{eqnarray}
&&\lim_{P_{\text{u}},P_{\text{v}}\rightarrow \infty }\left( \log A_{\max
}\beta _{1}-\Phi \left( A_{\max }\text{, }\beta _{1}\right) \right)  \notag
\\
&=&\lim_{P_{\text{u}},P_{\text{v}}\rightarrow \infty }\left( \log A_{\max
}\beta _{1}-\frac{\Phi \left( A_{\max }\beta _{1}\text{, }\beta
_{1}^{-1}\right) }{\beta _{1}^{-1}}\right)  \notag \\
&=&\left( \beta _{1}-1\right) \log \left( 1-\beta _{1}^{-1}\right) +1\text{,}
\end{eqnarray}%
\begin{eqnarray}
&&\lim_{P_{\text{u}},P_{\text{v}}\rightarrow \infty }\left( \log A_{\min
}\beta _{1}-\Phi \left( A_{\min }\text{, }\beta _{1}\right) \right)  \notag
\\
&=&\lim_{P_{\text{u}},P_{\text{v}}\rightarrow \infty }\left( \log A_{\min
}\beta _{1}-\frac{\Phi \left( A_{\min }\beta _{1}\text{, }\beta
_{1}^{-1}\right) }{\beta _{1}^{-1}}\right)  \notag \\
&=&\left( \beta _{1}-1\right) \log \left( 1-\beta _{1}^{-1}\right) +1\text{.}
\end{eqnarray}

Based on the above analysis, as $N_{\text{A}}$, $N_{\text{B}}$, $N_{\text{A}%
}-N_{\text{B}}$, $N_{\text{E}}\rightarrow \infty $ with $N_{\text{A}}/N_{%
\text{E}}\rightarrow \beta _{1}$, $N_{\text{A}}/N_{\text{B}}\rightarrow
\beta _{2}$ and $N_{\text{B}}/N_{\text{E}}\rightarrow \beta _{3}$, and for
$P_{\text{u}},P_{\text{v}}\rightarrow \infty$
\begin{eqnarray}
 R_{\text{LB}}/N_{\text{B}} &\overset{a.s.}{\rightarrow }& \Delta (A_{\max }) \text{,}  \notag \\
 R_{\text{UB}}/N_{\text{B}} &\overset{a.s.}{\rightarrow }& \Delta (A_{\min }) \text{,}  \label{LP}
\end{eqnarray}
where $\Delta (\cdot )$ is given in (\ref{ddd}).

Using (\ref{in_low}), (\ref{in_up}) and (\ref{LP}), under the above
conditions, we have (\ref{ISR_BOUNDS}).
\QEDA


\end{document}